\documentclass[11pt]{article}
\usepackage{palatino}
\usepackage{pgfplots} 
\usepackage{tikz, tikzpeople}
\usetikzlibrary{calc,intersections,decorations.pathmorphing,decorations.markings,patterns,arrows.meta,positioning}
\usetikzlibrary{angles,quotes}
\usetikzlibrary{arrows.meta}
\usetikzlibrary{arrows}
\usetikzlibrary{shapes.multipart}
\usetikzlibrary{fit}
\usetikzlibrary{shapes.geometric,positioning}
\usetikzlibrary{calc}
\usepackage{bm}
\usepackage{mathtools}
\usepackage{xfrac}
\usepackage{graphicx}
\usepackage[linesnumbered,ruled,vlined]{algorithm2e}
\usepackage{cite}
\usepackage{amsmath,amssymb,amsfonts, amsthm,  mathrsfs}
\usepackage[linesnumbered,ruled,vlined]{algorithm2e}
\usepackage{algpseudocode}
\usepackage{textcomp}
\usepackage{xcolor}
\usepackage{braket}
\usepackage{enumitem}
\usepackage{cancel}
\usepackage{hyperref}
\usepackage{stfloats}
\usepackage{xparse}

\newcommand{\cmark}{\ensuremath{\checkmark}}
\newcommand{\xmark}{\ensuremath{\times}}

\usetikzlibrary{patterns}
\newtheorem{lemma}{Lemma}
\newtheorem{theorem}{Theorem}

\newtheorem{definition}{Definition}

\def\bbsmatrix#1{\begin{bsmallmatrix}#1\end{bsmallmatrix}}

\newlength{\leftstackrelawd}
\newlength{\leftstackrelbwd}
\def\leftstackrel#1#2{\settowidth{\leftstackrelawd}%
{${{}^{#1}}$}\settowidth{\leftstackrelbwd}{$#2$}%
\addtolength{\leftstackrelawd}{-\leftstackrelbwd}%
\leavevmode\ifthenelse{\lengthtest{\leftstackrelawd>0pt}}%
{\kern-.5\leftstackrelawd}{}\mathrel{\mathop{#2}\limits^{#1}}}

\newcommand{\mN}{\mathsf{N}} 
\newcommand{\mZ}{\mathsf{Z}} 

\NewDocumentCommand{\mP}{e{_}}{%
  \mathsf{P}%
  \IfValueT{#1}{_{\!#1}}%
}

\newcommand{\mQ}{\mathsf{Q}}

\newcommand{\mB}{\mathsf{B}}

\newcommand{\NS}{\mathchoice
  {\mathrm{\scriptscriptstyle NS}} 
  {\mathrm{\scriptscriptstyle NS}} 
  {\mathrm{\scriptscriptstyle NS}} 
  {\mathrm{\scriptscriptstyle NS}} 
}

\newcommand{\TxNS}{\mathchoice
  {\mathrm{\scriptscriptstyle Tx\!-\!NS}} 
  {\mathrm{\scriptscriptstyle Tx\!-\!NS}} 
  {\mathrm{\scriptscriptstyle Tx\!-\!NS}} 
  {\mathrm{\scriptscriptstyle Tx\!-\!NS}} 
}

\allowdisplaybreaks

\title{The Capacity Region of the Multiple Access Channel with Non-Signaling Assistance}
\author{Yuhang Yao, Syed A. Jafar\\
{\small Nhu Department of Electrical Engineering and Computer Science}\\
{\small University of California Irvine, Irvine, CA 92697}\\
{\small \it Email: \{yuhangy5, syed\}@uci.edu}
}

\date{}

\begin{document}
\maketitle

\begin{abstract}
The capacity region of the $K$-sender discrete memoryless multiple access channel (MAC) is fully characterized when non-signaling (NS) assistance is available to all $K$ transmitters and the receiver. It is shown to have the same form as the classical capacity region of the MAC, except that the input distribution is allowed to be arbitrarily dependent across the senders. In particular, the NS-assisted capacity region matches the natural generalization to $K$ senders of an outer bound that was previously established by Fawzi and Ferm\'{e} for $K=2$ senders. Additionally, we provide examples of $K$-sender MACs where the multiplicative gain in capacity from NS-assistance is arbitrarily close to $K$. Combined with an upper bound from prior work, this establishes $K$ as the extremal value of the multiplicative gain from NS-assistance across all $K$-sender MAC settings.
\end{abstract}

\section{Introduction} 
With quantum technologies on the horizon there is the need to expand the scope of classical network information theory to allow superclassical resources, such as non-local correlations. A non-signaling (NS) extension of classical information theory retains its focus on classical communication but allows communicating nodes to share arbitrary non-signaling correlations in advance --- i.e., those correlations which cannot by themselves constitute a communication channel across any network partition. NS-correlations form a natural superset of quantum correlations, and are often much more tractable  as they can be described as a polytope with finitely many linear constraints, whereas a characterization of quantum correlations  generally involves infinite hierarchies of nonlinear operator-algebraic relations along with semidefinite constraints \cite{Nonlocality}. The NS-assisted capacity of a classical communication network thus provides arguably the tightest seemingly tractable relaxation of classical capacity which cannot be surpassed with the help of any quantum entanglement that is shared in advance between the classical nodes. 

The framework of NS-assisted capacity of classical communication networks is still in its early stages. However, there has been some recent progress which we summarize next. 

The NS-assisted capacity of a point-to-point channel is shown by Matthews in \cite{matthews2012linear} to be the same as its classical capacity, strengthening the prior result of Bennett et al. \cite{Bennett_Shor_Smolin_Thapliyal_PRL}, which showed that quantum correlations do not increase the capacity of a classical point-to-point channel. For the point-to-point \emph{channel with state} where the channel state information at the transmitter (CSIT) is available non-causally, the capacity is shown in \cite{Yao_Jafar_CSITTP} to be the same as the capacity of a corresponding classical setting where the channel state is also made available to the receiver --- a phenomenon that \cite{Yao_Jafar_CSITTP} labels as \emph{virtual signaling of CSIT}. Remarkably with NS-assistance, \cite{Yao_Jafar_CSITTP} finds that not only the capacity but also the optimal probability of success remains unchanged if the channel state is additionally provided to the receiver. Reference \cite{Yao_Jafar_NSCWS} shows that the NS-assisted capacity of a point-to-point channel remains the same as if non-causal CSIT was available, even if the CSIT is actually made available only \emph{causally}, albeit the aforementioned stronger equivalence in probability of success no longer holds in the causal setting. For both causal and non-causal CSIT, it is noted in \cite{Yao_Jafar_CSITTP, Yao_Jafar_NSCWS} that there exist point-to-point channels where the multiplicative gain in capacity from NS-assistance can be arbitrarily large. Building on these NS-assisted capacity gains,  \cite{Yao_Jafar_QEACWS} identifies channels with causal CSIT where significant gains over classical capacity are achievable with quantum-entanglement assistance.

Beyond the point-to-point channel, NS-assisted capacity has been explored in multiple access channel (MAC), broadcast channel (BC), and interference channel (IC) settings. Quek and Shor in \cite{Quek_Shor,Quek_Dissertation} show that NS or quantum-assistance can improve the capacity of interference and multiple-access channels, leaving open the corresponding question for the ``\emph{third pillar of network information theory}" --- the broadcast channel. The existence of NS-assisted capacity gains in a BC is identified as a `major open problem' by Fawzi and Ferm\'{e}  who study it in \cite{fawzi2024broadcast} from an algorithmic perspective. The question is resolved in \cite{Yao_Jafar_NS_DoF} where the extremal value of the multiplicative gain from NS-assistance across all $K$-user BC settings is shown to be $K$. More recently, the NS-assisted capacity region of the discrete memoryless $K$-user BC is fully characterized in \cite{Yao_Jafar_NSSato}. Remarkably, the NS-assisted capacity region of the BC coincides with the so-called \emph{Sato region}.

The MAC and the BC are often considered as two sides of the same coin, with the MAC often being the more tractable of the two. Indeed, many of the earliest results on NS-assisted capacity in multiuser settings appear in the context of the MAC. Building on the findings of Quek and Shor in \cite{Quek_Shor,Quek_Dissertation}, Leditzky et al. \cite{leditzky2020playing} show that a two-sender MAC constructed from a nonlocal game can exhibit a capacity improvement when the two transmitters share quantum entanglement. Subsequent works, including \cite{seshadri2023separation, QMAC_Yun, Qcoop_Nam}, further study MACs constructed from nonlocal games and develop bounds on their correlation-assisted sum-rate capacities. Pereg et al. \cite{pereg2024MAC_QEassist} study general classical two-sender MACs with quantum entanglement shared between the transmitters and establish inner and outer bounds on the corresponding capacity region. More recently, building on the quantum-assisted capacity gains in the point-to-point setting with causal CSIT \cite{Yao_Jafar_NSCWS}, it is shown in \cite{Yao_Jafar_Unlock} that quantum-assistance allows exponential/unbounded multiplicative capacity gains in a MAC with causal CSIT. The channel with state model  is critical to such gains. Indeed,  \cite{Yao_Jafar_Unlock} showed that for a $K$-user MAC without state, the extremal value of multiplicative capacity gain, even with NS-assistance, cannot exceed $K$. 

Let us focus henceforth on the capacity of the MAC without state, when NS-assistance is available to all parties. This brings us to the work by Fawzi and Ferm\'{e} in \cite{fawzi2024MAC} which most directly motivates the present paper. Exploring the NS-assisted capacity of the MAC without state, Fawzi and Ferm\'{e} raise the following questions: ``\emph{Can NS correlations lead to significant gains in capacity for natural MACs? Can we find a characterization of the capacity region of the MAC when NS resources between the parties are allowed?}" In particular, Fawzi and Ferm\'{e} obtain an outer bound on the NS-assisted capacity region by introducing a certain relaxed-NS-assistance model. While they show that the relaxed-NS-assistance model can be loose in terms of the optimal probability of success, they leave open the possibility that the bound could be tight for the capacity region, identifying it as ``\emph{the main open problem}" in \cite{fawzi2024MAC}. This open problem is the motivation for the present work.

As our main result, we fully characterize (Theorem \ref{thm:NSMAC} in Section \ref{sec:Kcapregion}) the capacity region of the $K$-sender discrete memoryless MAC when NS-assistance is available to all $K$ transmitters and the receiver. The capacity region is shown to have the same form as the classical capacity region of the MAC, except that the input distribution is allowed to be arbitrarily dependent across the senders. In particular, the NS-assisted capacity region matches the natural generalization to $K$ senders of the outer bound established by Fawzi and Ferm\'{e} for $K=2$ senders. Additionally, we provide examples of $K$-sender MACs (Theorem \ref{thm:sum_prod_lock} in Section \ref{sec:extremal}) where the multiplicative gain in capacity from NS-assistance is arbitrarily close to $K$. Combined with the previously noted upper bound from \cite{Yao_Jafar_Unlock}, this establishes $K$ as the extremal value of the multiplicative gain from NS-assistance across all $K$-sender MAC settings.

\section{Preliminaries}
\subsection{Sets}
Let $\mathbb{N}$ denote the set of positive integers. Let $\mathbb{R}_{\geq 0}$ and $\mathbb{R}_{> 0}$ denote the sets of non-negative and positive real numbers, respectively. 
For $n\in \mathbb{N}$, the notation $[n]$ denotes the set of integers $\{1,2,\ldots, n\}$, and $x^n$ denotes $(x_1,x_2,\ldots, x_n)$.
We write $\mathcal{S} \subseteq \mathcal{T}$ if $\mathcal{S}$ is a subset of $\mathcal{T}$, and $\mathcal{S} \subset \mathcal{T}$ if $\mathcal{S}$ is a proper subset of $\mathcal{T}$.
For $\mathcal{S}=\{i_1,i_2,\ldots,i_m\}\subset\mathbb{N}$, the notation $(A_i)_{i\in\mathcal{S}}$ denotes the tuple $(A_{i_1},A_{i_2},\ldots,A_{i_m})$, where $i_1<i_2<\cdots<i_m$. 
We use $\mathbb{I}(\cdot)$ to denote the indicator function, which equals $1$ if the predicate is true and $0$ otherwise.

\subsection{Probability}
A probability mass function (pmf) on a finite-cardinality discrete set $\mathcal{X}$ is a function $\mP_{X}\colon \mathcal{X} \to \mathbb{R}_{\geq 0}$ satisfying $\sum_{x\in \mathcal{X}} \mP_X(x) = 1$. The set of all pmfs on $\mathcal{X}$ is denoted by $\mathcal{P}(\mathcal{X})$. 
The notation $X \sim \mP$ indicates that the random variable $X$ is distributed according to the pmf $\mP$.
Let $\mathcal{X} \times \mathcal{Y}$ denote the Cartesian product of $\mathcal{X}$ and $\mathcal{Y}$, and let $\mathcal{X}^n$ denote the $n$-fold Cartesian product of $\mathcal{X}$. 
A joint pmf $\mP_{X_1,\ldots,X_n}$ is simply a pmf on $\mathcal{X}_1\times\cdots\times\mathcal{X}_n$.
A conditional pmf $\mP_{X\mid Y}$ associated with an ordered pair of discrete sets $(\mathcal{X},\mathcal{Y})$ is a function $\mP_{X\mid Y} \colon \mathcal{X}\times \mathcal{Y} \to \mathbb{R}_{\geq 0}$ such that, for every $y\in \mathcal{Y}$, $\sum_{x}\mP_{X\mid Y}(x\mid y) = 1$. The set of all conditional pmfs associated with $(\mathcal{X}, \mathcal{Y})$ is denoted by $\mathcal{P}(\mathcal{X} \mid \mathcal{Y})$. 
Given a joint pmf $\mP_{XY}$, the marginal pmfs $\mP_X$ and $\mP_Y$, as well as the conditional pmfs $\mP_{Y\mid X}$ and $\mP_{X\mid Y}$, are defined in the usual way. Whenever $\mP_X(x)=0$, the conditional pmf $\mP_{Y\mid X=x}$ may be chosen arbitrarily from $\mathcal{P}(\mathcal{Y})$.
The notation $\mP_X^{\otimes n}(x^n)$ denotes the product pmf $\mP_X^{\otimes n}(x^n)=\prod_{i=1}^n \mP_{X}(x_i)$, and $\mP_{Y\mid X}^{\otimes n}(y^n \mid x^n) = \prod_{i=1}^n \mP_{Y\mid X}(y_i\mid x_i)$.
We use $\Pr(E)$ to denote the probability of a random event $E$, and $\mathbb{E}[X]$ to denote the expectation of $X$.

\subsection{Information measures and typicality}
For $X\sim \mP_X$, let $H(X) = -\sum_{x} \mP_X(x) \log_2 \mP_X(x) = -\mathbb{E}_{X\sim \mP_X}[\log_2(\mP_X(X))]$ denote the entropy of $X$. For $(X,Y)\sim \mP_{XY}$, let $I(X;Y) = \sum_{(x,y)}\mP_{XY}(x,y) \log_2 \frac{\mP_{XY}(x,y)}{\mP_{X}(x)\mP_Y(y)}$ denote the mutual information of $X,Y$, and let $I(X;Y\mid Z) = \sum_{(x,y,z)}\mP_{XYZ}(x,y,z) \log_2 \frac{\mP_{XY\mid Z}(x,y\mid z)}{\mP_{X\mid Z}(x\mid z)\mP_{Y\mid Z}(y\mid z)}$ denote the conditional mutual information of $X,Y$ given $Z$.
The subscript indicating the underlying distribution may be omitted when it is clear from the context. 
 
For a sequence $x^n\in \mathcal{X}^n$, its empirical pmf (type) $\widehat{\mP}_{x^n}\in \mathcal{P}(\mathcal{X})$ is defined by
\begin{align} \label{eq:def_type}
	\widehat{\mP}_{x^n}(x) \triangleq \frac{1}{n} \sum_{i=1}^n \mathbb{I}(x_i=x), \qquad \forall x\in \mathcal{X}.
\end{align} 
Given $\mP_X\in \mathcal{P}(\mathcal{X})$, $\epsilon \in (0,1)$ and $n\in \mathbb{N}$, define the (strongly) typical set
\begin{align} \label{eq:def_typical_set}
	\mathcal{T}_{\epsilon}^{(n)}(\mP_X) \triangleq \big\{ x^n \in \mathcal{X}^n \colon |\widehat{\mP}_{x^n}(x)-\mP_X(x)| \leq \epsilon \mP_X(x), \forall x\in \mathcal{X} \big\}.
\end{align}
For jointly distributed random variables $(X_1,\ldots, X_m)$, the jointly typical set is defined by treating $(X_1,\ldots, X_m)$ as a single random variable.

\section{Problem formulation}
\subsection{Non-signaling assisted $K$-sender discrete memoryless multiple access channel}
A $K$-sender discrete memoryless multiple access channel (DM-MAC) is specified by a conditional pmf $\mN_{Y\mid X_1X_2\cdots X_K} \in \mathcal{P}(\mathcal{Y} \mid \mathcal{X}_1 \times \mathcal{X}_2 \times\cdots \times \mathcal{X}_K)$ with underlying finite-cardinality discrete sets $\mathcal{Y},\mathcal{X}_1,\mathcal{X}_2,\ldots, \mathcal{X}_K$. For each use of the channel, Transmitter $k$ inputs $x_k \in \mathcal{X}_k$ for every $k\in [K]$, and the probability that the receiver receives $Y=y \in \mathcal{Y}$ conditioned on the channel inputs $X_k=x_k\in \mathcal{X}_k, \forall k\in [K]$, is equal to $\mN_{Y\mid X_1X_2\cdots X_K}(y\mid x_1,x_2,\ldots, x_K)$.
\begin{figure}[htbp]
\centering
\begin{tikzpicture}[scale=0.85, transform shape]

    \node (Z) [
        rectangle,
        draw,
        thick,
        minimum width=13cm,
        minimum height=2cm
    ] at (0,0) {};

    \draw[thick] (-4.5,1) -- (-4.5,-1);
    \draw[thick] (-2.5,1) -- (-2.5,-1);
    \draw[thick] (-0.5,1) -- (-0.5,-1);
    \draw[thick] (1.5,1) -- (1.5,-1);

    \node [
        rectangle,
        minimum height=0.8cm,
        minimum width = 10cm,
        fill=gray!30
    ] at (0,0) {
   
        $\mZ\big(
            x_1^n,x_2^n,\ldots,x_K^n,
            (\widehat{w}_1,\widehat{w}_2,\ldots,\widehat{w}_K)
            \mid 
            w_1,w_2,\ldots,w_K,y^n \big)$
    };

    \node (IT1) at (-5.5,0.9) {};
    \node[below=-0.25cm of IT1] {\small $w_1$};

    \node (IT2) at (-3.5,0.9) {};
    \node[below=-0.25cm of IT2] {\small $w_2$};

    \node (ITK) at (0.5,0.9) {};    
    \node[below=-0.25cm of ITK] {\small $w_K$};

    \node (IR) [
        rectangle,
        minimum width=0.6cm
    ] at (5,-0.9) {};
    \node[above=-0.3cm of IR] {\small $y^n$};

    \node (OT1) at ($(IT1.south)+(0,-1.55)$) {\small $x_1^n$};
    \node (OT2) at ($(IT2.south)+(0,-1.55)$) {\small $x_2^n$};
    \node (OTK) at ($(ITK.south)+(0,-1.55)$) {\small $x_K^n$};

    \node (OR) at ($(IR.north)+(-1,1.5)$)
        {\small $(\widehat w_1,\widehat w_2,\ldots,\widehat w_K)$};

    \node (N) [
        rectangle,
        draw,
        thick,
        minimum width=3cm,
        minimum height=2.2cm
    ] at (3,-3) {};

    \node [
        rectangle,
        minimum height=1.5cm,
        fill=gray!30
    ] at (3,-3) {
        \small
        $\mathsf{N}^{\otimes n}_{Y\mid X_1X_2\cdots X_K}$
    };

    \node (X1) at ($(N.west)+(0.15,-0.7)$) {};
    \node (X2) at ($(N.west)+(0.15,-0.25)$) {};
    \node (XD) at ($(N.west)+(0.15,0.2)$) {};
    \node (XK) at ($(N.west)+(0.15,0.7)$) {};

    \node (Y) at ($(N.east)+(-0.15,0)$) {};

    \node (Tx1) [above=1cm of IT1] {\sc Tx$1$};
    \node (Tx2) [above=1cm of IT2] {\sc Tx$2$};
    \node at (-1.5,2.3) {$\cdots$};
    \node (TxK) [above=1cm of ITK] {\sc Tx$K$};

    \node (Rx) [above=1cm of OR] {\sc Receiver};

    \node (C) [below=0cm of N] {\small\sc Multiple access channel};

    \draw[-{Latex[length=2.5mm]},thick]
        (Tx1) -- (IT1)
        node[pos=0.5,left] {$W_1$};

    \draw[-{Latex[length=2.5mm]},thick]
        (Tx2) -- (IT2)
        node[pos=0.5,left] {$W_2$};

    \draw[-{Latex[length=2.5mm]},thick]
        (TxK) -- (ITK)
        node[pos=0.5,left] {$W_K$};

    \draw[-{Latex[length=2.5mm]},thick]
        (OT1) |- (X1)
        node[pos=0.14,left] {$X_1^n$};

    \draw[-{Latex[length=2.5mm]},thick]
        (OT2) |- (X2)
        node[pos=0.175,left] {$X_2^n$};

    \node[above left= -0.4cm and 0.3cm of XD] {\small $\vdots$};

    \draw[-{Latex[length=2.5mm]},thick]
        (OTK) |- (XK)
        node[pos=0.30,left] {$X_K^n$};

    \draw[-{Latex[length=2.5mm]},thick]
        (Y) -| (IR)
        node[pos=0.68,right] {$Y^n$};

    \draw[-{Latex[length=2.5mm]},thick]
        (OR) -- (Rx)
        node[pos=0.5,right]
        {$(\widehat W_1,\widehat W_2,\ldots,\widehat W_K)$};

\end{tikzpicture}
\caption{NS-assisted coding scheme $\mZ$ for the $K$-sender
multiple access channel. `Tx$k$' denotes \emph{Transmitter $k$} for each $k\in [K]$.}
\label{fig:MAC_K}
\end{figure}

Let $M_1,M_2,\ldots, M_K, n$ be positive integers. 
An $(M_1,M_2,\ldots, M_K,n)$ NS-assisted coding scheme operating on $n$ channel uses is specified by a conditional pmf $\mZ\in \mathcal{P}(\mathcal{X}_1^n \times \mathcal{X}_2^n \times \cdots \times \mathcal{X}_K^n \times [M_1]\times [M_2] \times \cdots \times [M_K] \mid [M_1]\times [M_2] \times \cdots \times [M_K] \times \mathcal{Y}^n)$. 

Recall that a $(K+1)$-partite non-signaling correlation must not allow signaling (communication) across any bipartition. The $2^{K+1}-2$ NS conditions corresponding to all non-trivial bipartitions, are equivalent to the $K+1$ conditions that require only that each individual party cannot signal to the coalition of the remaining $K$ parties (without using the channel). Accordingly, we require that the conditional pmf $\mZ$ should satisfy the following $K+1$ conditions,
\begin{enumerate}[label=\textit{C\arabic*}:, start=0]
	\item $\sum_{\widehat{w}_1,\widehat{w}_2,\ldots, \widehat{w}_K}\mZ(x_1^n,x_2^n,\ldots,x_K^n,(\widehat{w}_1,\widehat{w}_2,\ldots, \widehat{w}_K)\mid w_1,w_2,\ldots, w_K, y^n)$ must be invariant under changes of $y^n$,
\end{enumerate}
which ensures no-signaling from the receiver to the coalition of the $K$ transmitters. Furthermore, we require
for each $k\in [K]$,
\begin{enumerate}[label=\textit{Ck}:]
	\item $\sum_{x_k^n}\mZ(x_1^n,x_2^n,\ldots,x_K^n,(\widehat{w}_1,\widehat{w}_2,\ldots, \widehat{w}_K)\mid w_1,w_2,\ldots, w_K, y^n)$ must be invariant under changes of $w_k$.
\end{enumerate}
Observe that \textit{Ck} ensures no-signaling from Transmitter $k$ to the coalition comprising the remaining $K-1$ transmitters and the receiver.

Such a scheme $\mZ$ is illustrated in Fig. \ref{fig:MAC_K}, represented as a black box shared across all parties where each party has its own input into and output out of the box $\mZ$, and works as follows. 
Let $(W_1,W_2,\ldots, W_K)$ be $K$ independent messages such that for each $k\in [K]$, $W_k$ is uniformly distributed over $[M_k]$. For each $k\in [K]$, Transmitter $k$ inputs $W_k$ to the box. The scheme generates $(X_1^n,X_2^n,\ldots, X_K^n)$, such that $X_k^n$ is provided to Transmitter $k$, according to the conditional pmf,
\begin{align}
&\mZ(x_1^n,x_2^n,\ldots,x_K^n\mid w_1,w_2,\ldots, w_K)\notag\\
&=\sum_{\widehat{w}_1, \widehat{w}_2,\ldots, \widehat{w}_K}\mZ(x_1^n,x_2^n,\ldots,x_K^n,(\widehat{w}_1,\widehat{w}_2,\ldots, \widehat{w}_K)\mid w_1,w_2,\ldots, w_K, y^n).
\end{align}
Note that this distribution does not depend on $y^n$, as required by \textit{C0}. 
The $(X_1^n,X_2^n,\ldots, X_K^n)$ generated by the scheme are then sent through $n$ uses of the multiple access channel.
The receiver receives $Y^n$ from the channel and inputs $Y^n$ to the box $\mZ$. 
The scheme provides $(\widehat{W}_1,\widehat{W}_2,\ldots, \widehat{W}_K)$ to the receiver, according to the conditional joint pmf $\mZ(\widehat{w}_1, \widehat{w}_2, \ldots, \widehat{w}_K \mid w_1,w_2,\ldots, w_K, x_1^n, x_2^n,\ldots, x_K^n, y^n)$.

The joint distribution for the random variables $(W_1,\ldots, W_K, X_1^n,  \ldots, X_K^n, Y^n,  \widehat{W}_1,  \ldots, \widehat{W}_K)$ is,
\begin{align} \label{eq:joint_dist}
	&\Pr(W_k=w_k, X_k^n=x_k^n, Y^n=y^n, \widehat{W}_k = \widehat{w}_k, \forall k\in [K]) \notag \\
	&=\big( \prod_{k=1}^K M_k \big)^{-1}\prod_{i=1}^n \mN_{Y\mid X_1X_2\cdots X_K}(y_i\mid x_{1,i},x_{2,i}, \ldots, x_{K,i}) \notag \\
	&\hspace{1cm} \times  \mZ(x_1^n,x_2^n,\ldots,x_K^n,(\widehat{w}_1,\widehat{w}_2,\ldots, \widehat{w}_K)\mid w_1,w_2,\ldots, w_K, y^n).
\end{align}
For each $k\in [K]$, define the probability of decoding error for the message $W_k$ as,
\begin{align}
P_{e,k}(\mZ) = \Pr(\widehat{W}_k\neq W_k).
\end{align}

\subsection{Achievable rate tuples and capacity region}
Given the $K$-sender DM-MAC $\mN_{Y\mid X_1X_2\cdots X_K}$, a rate tuple $(R_1,R_2,\ldots, R_K)\in \mathbb{R}_{\geq 0}^K$ is said to be achievable by NS-assisted coding schemes if there exists a sequence of $(M_{1}^{(n)}, \ldots, M_{K}^{(n)},n)$ NS-assisted coding schemes $(\mZ^{(n)})_{n\in \mathbb{N}}$ such that 
\begin{align}
\lim_{n\to \infty} P_{e,k} (\mZ^{(n)}) &= 0,\\
\mbox{and }\liminf_{n\to \infty} \frac{\log_2 M_{k}^{(n)}}{n} &\geq R_k, \quad\forall k\in [K].
\end{align}
The NS-assisted capacity region $\mathcal{C}^{\NS}(\mN_{Y\mid X_1X_2\cdots X_K})$ is defined as the closure of all rate tuples achievable by NS-assisted coding schemes.

In contrast, we use $\mathcal{C}(\mN_{Y\mid X_1X_2\cdots X_K})$ to denote the classical capacity region of the $K$-sender DM-MAC \cite[Theorem 4.5]{NIT}. 

\subsection{Sum-rate capacity}
Given the $K$-sender DM-MAC $\mN_{Y\mid X_1X_2\cdots X_K}$, let us define its NS-assisted sum-rate capacity as
\begin{align}
	C_{\Sigma}^{\NS}(\mN_{Y\mid X_1X_2\cdots X_K}) = \max_{\substack{(R_1,R_2,\ldots, R_K) \in\\ \mathcal{C}^{\NS}(\mN_{Y\mid X_1X_2\cdots X_K})}} (R_1+R_2+\cdots+R_K),
\end{align}
and its classical sum-rate capacity as
\begin{align}
	C_{\Sigma}(\mN_{Y\mid X_1X_2\cdots X_K}) = \max_{\substack{(R_1,R_2,\ldots, R_K) \in\\ \mathcal{C}(\mN_{Y\mid X_1X_2\cdots X_K})}} (R_1+R_2+\cdots+R_K).
\end{align}

\section{Results}
\subsection{NS-assisted capacity region of $K$-sender MAC}\label{sec:Kcapregion}
The main result of this paper is stated in the following theorem.
\begin{theorem}[NS-assisted $K$-sender MAC capacity region] \label{thm:NSMAC}
	For a $K$-sender discrete memoryless multiple access channel $\mN_{Y\mid X_1X_2\cdots X_K}$, the NS-assisted capacity region $\mathcal{C}^{\NS}(\mN_{Y\mid X_1X_2\cdots X_K})$ is the closure of the convex hull of all rate tuples $(R_1,R_2, \ldots, R_K) \in \mathbb{R}_{\geq 0}^K$ satisfying
	\begin{align}
		\sum_{k\in \mathcal{K}} R_k \leq I(X_{\mathcal{K}}; Y \mid X_{[K] \setminus \mathcal{K}}),~~ \forall \emptyset \neq \mathcal{K} \subseteq [K]
	\end{align}
	for some joint pmf $\mP_{X_1X_2\cdots X_K}\in \mathcal{P}(\mathcal{X}_1\times \mathcal{X}_2\times \cdots \times \mathcal{X}_K)$, where all (conditional) mutual informations are defined according to $\mP_{X_1X_2\cdots X_K}\times \mN_{Y\mid X_1X_2\cdots X_K}$. Here, $X_{\mathcal{K}}$ denotes $(X_k)_{k\in \mathcal{K}}$, and similarly $X_{[K]\setminus \mathcal{K}} = (X_k)_{k\in [K] \setminus \mathcal{K}}$.
	
	The region can be equivalently expressed by introducing an auxiliary time-sharing variable $Q$, as
	\begin{align}
		&\mathcal{C}^{\NS}(\mN_{Y\mid X_1X_2\cdots X_K}) \notag \\
		&= \bigcup_{\mP_{Q X_1X_2\cdots X_K}}
		\left\{
		\begin{array}{l}
			(R_1,R_2,\ldots, R_K) \in \mathbb{R}_{\geq 0}^K\colon\\ \sum_{k\in \mathcal{K}} R_k \leq I(X_{\mathcal{K}}; Y \mid X_{[K] \setminus \mathcal{K}},Q),~~ \forall \emptyset \neq \mathcal{K} \subseteq [K]
		\end{array}
		\right.
	\end{align}
	where the union is over all joint pmfs $\mP_{QX_1X_2\cdots X_K} \in \mathcal{P}(\mathcal{Q} \times \mathcal{X}_1 \times \mathcal{X}_2 \times \cdots \times \mathcal{X}_K)$ and it suffices to have $|\mathcal{Q}|\leq K$. All (conditional) mutual informations are defined according to $\mP_{Q X_1X_2 \cdots X_K} \mN_{Y\mid X_1X_2\cdots X_K}$.
\end{theorem}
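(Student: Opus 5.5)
We prove the two inclusions separately. The two descriptions of the region in Theorem \ref{thm:NSMAC} are equal by the usual time-sharing argument, and Carathéodory's theorem gives $|\mathcal{Q}|\leq K$; both steps are routine. For the converse we argue at the level of $n$-letter information quantities. For achievability we build an explicit NS box that emulates a random-coding scheme whose codeword tuples are drawn from an arbitrary joint pmf $\mP_{X_1\cdots X_K}$. The NS conditions \textit{C0}--\textit{CK} are enforced by symmetry under relabeling of each message.

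\emph{Converse.} Fix an $(M_1,\ldots,M_K,n)$ scheme $\mZ$ with error probabilities $P_{e,k}\leq\epsilon$, and fix $\emptyset\neq\mathcal{K}\subseteq[K]$. Compare the true joint law (\ref{eq:joint_dist}) with a fictitious law in which the channel is replaced by
\begin{align}
\mQ(y^n\mid x^n_{[K]\setminus\mathcal{K}}) = \Pr\big(Y^n=y^n\mid X^n_{[K]\setminus\mathcal{K}}=x^n_{[K]\setminus\mathcal{K}}\big),
\end{align}
computed under the true law, so that the fictitious output depends only on the inputs outside $\mathcal{K}$. Under this law, sum $\mZ$ over $x^n_{\mathcal{K}}$ and apply the conditions \textit{Ck} for $k\in\mathcal{K}$ one at a time. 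This shows that the joint law of $(X^n_{[K]\setminus\mathcal{K}},Y^n,\widehat W_{\mathcal{K}})$ does not depend on $W_{\mathcal{K}}$, so the probability of correctly decoding $W_{\mathcal{K}}$ is at most $1/\prod_{k\in\mathcal{K}}M_k$. A data-processing (meta-converse) step on the indicator $\mathbb{I}(\widehat W_{\mathcal{K}}=W_{\mathcal{K}})$ then yields
\begin{align}
(1-K\epsilon)\log_2\prod_{k\in\mathcal{K}}M_k \leq D(\text{true}\,\|\,\text{fictitious})+1 = I(X^n_{\mathcal{K}};Y^n\mid X^n_{[K]\setminus\mathcal{K}})+1 .
\end{align}
The channel is memoryless, so the right-hand side is at most
\begin{align}
\sum_{i=1}^n I(X_{\mathcal{K},i};Y_i\mid X_{[K]\setminus\mathcal{K},i})+1 .
\end{align}
Let $Q$ be uniform on $[n]$, and take $\mP_{QX_1\cdots X_K}$ to be the per-letter input law of the scheme; this law is arbitrary and correlated across senders. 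The important point is that this single $\mP_{QX_1\cdots X_K}$ is defined by the scheme alone and does not depend on $\mathcal{K}$. Hence it serves all $2^K-1$ inequalities simultaneously. Dividing by $n$ and letting $n\to\infty$ and then $\epsilon\to 0$ gives the outer bound.

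\emph{Achievability.} Fix $\mP_{X_1\cdots X_K}$, rates strictly inside the region, and let $\mathcal{T}$ denote the jointly typical set $\mathcal{T}^{(n)}_\epsilon(\mP_{X_1\cdots X_K}\mN_{Y\mid X_1\cdots X_K})$. The box works in two stages.
\begin{enumerate}[label=(\roman*)]
\item It outputs $X^n_{[K]}\sim\mP^{\otimes n}_{X_1\cdots X_K}$, independently of $w_{[K]}$. This already gives \textit{C0} and the transmitter-side marginals.
\item Given $(w_{[K]},x^n_{[K]},y^n)$, it outputs $\widehat w_{[K]}$ according to a fictitious random codebook.
\end{enumerate}
In stage (ii), the true tuple $w_{[K]}$ is assigned the actual codewords $x^n_{[K]}$. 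Every other tuple $m\neq w_{[K]}$ is assigned fictitious codewords with a nested conditional structure, built one subset of senders at a time. For a tuple $m$ that differs from $w_{[K]}$ exactly on $\mathcal{K}$, the coordinates in $[K]\setminus\mathcal{K}$ are inherited and the $\mathcal{K}$-part is drawn from $\mP^{\otimes n}_{X_{\mathcal{K}}\mid X_{[K]\setminus\mathcal{K}}}$. The receiver then applies joint-typicality decoding: output the unique $m$ whose codeword tuple lies in $\mathcal{T}$ together with $y^n$. A union bound over all such $m$ bounds the error probability by
\begin{align}
\sum_{\mathcal{K}}2^{n\left(\sum_{k\in\mathcal{K}}R_k-I(X_{\mathcal{K}};Y\mid X_{[K]\setminus\mathcal{K}})+\delta(\epsilon)\right)} .
\end{align}
Since each exponent is negative for rates inside the region and small $\epsilon$, the error probability vanishes, and convexification follows by time sharing.

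\emph{Main obstacle.} The hard part is making the fictitious codebook consistent, so that one joint random codebook, rather than $2^K-1$ separate constructions, yields each subset's conditional codeword statistics and hence these exponents. It must also keep \textit{Ck}: summing over $x^n_k$ must remove all dependence on $w_k$. I would first try a symmetrization. Define the box as the average, over independent uniformly random permutations $\pi_k$ of each $[M_k]$, of a fixed construction. The averaging is meant to decouple the decoded output from the message labels, so that after $x^n_k$ is marginalized the joint law of $(x^n_{[K]\setminus\{k\}},\widehat w_{[K]})$ is invariant under changes of $w_k$. I would then check this invariance explicitly for each $k$.
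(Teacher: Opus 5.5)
Your converse is correct. It is a self-contained substitute for what the paper imports from \cite[Lemma 38]{fawzi2024MAC}. \textit{C0} makes the law with the fictitious channel $\Pr(y^n\mid x^n_{[K]\setminus\mathcal{K}})$ a valid distribution. The conditions \textit{Ck}, $k\in\mathcal{K}$, make $\widehat W_{\mathcal{K}}$ independent of $W_{\mathcal{K}}$ under that law. The relative entropy equals $I(X^n_{\mathcal{K}};Y^n\mid X^n_{[K]\setminus\mathcal{K}})$. The single-letterization and time sharing then match the paper. The gap is in the achievability.

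You assert that \textit{C0}--\textit{CK} are enforced by symmetry under relabeling of the messages. Twirling preserves the NS property but cannot create it: the box $\mP_{X_1\cdots X_K}^{\otimes n}(x^n_{[K]})\,\mathbb{I}(\widehat w_{[K]}=w_{[K]})$ is invariant under every relabeling yet violates every \textit{Ck}. After twirling, \textit{Ck} is an exact requirement. For each correctness pattern of the other messages, the entry with $\widehat w_k=w_k$ and the entry with $\widehat w_k\neq w_k$ must have the same sum over $x_k^n$.

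Your emulated codebook fails this precisely when the inputs are dependent. Take $K=2$ and $\widehat w_1\neq w_1$. Choose an atypical $x_1^n$, so that the true codeword pair is never typical with $y^n$ and never competes. Let $p_2(x_2^n,y^n)$ be the probability that $\tilde X_1^n\sim\mP_{X_1\mid X_2}^{\otimes n}(\cdot\mid x_2^n)$ gives $(\tilde X_1^n,x_2^n,y^n)\in\mathcal{T}$. Up to vanishing corrections, summing your box over $x_2^n$ gives
\begin{align*}
\sum_{x_2^n}\mP_{X_1X_2}^{\otimes n}(x_1^n,x_2^n)\,p_2(x_2^n,y^n)
\qquad\text{or}\qquad
\mP_{X_1}^{\otimes n}(x_1^n)\sum_{x_2^n}\mP_{X_2}^{\otimes n}(x_2^n)\,p_2(x_2^n,y^n),
\end{align*}
according as $\widehat w_2=w_2$ or $\widehat w_2\neq w_2$. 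The first weights $x_2^n$ by $\mP_{X_2\mid X_1}^{\otimes n}(\cdot\mid x_1^n)$, the second by $\mP_{X_2}^{\otimes n}$. For a constant $x_1^n=(a,\ldots,a)$ with $\mP_{X_2\mid X_1=a}\neq\mP_{X_2}$, and typical $y^n$, the two differ by an exponential factor. So Transmitter~2 can signal to the coalition of Transmitter~1 and the receiver, and \textit{C2} fails. For independent inputs your box is just a classical random code run on shared randomness, which is local and hence NS. That is why the construction looks right, but it only reaches the classical region.

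The missing idea is to stop emulating a decoder. The paper writes down a sub-distribution $\widetilde{\mZ}$ whose entry for the pattern ``wrong exactly on $\mathcal{K}$'' is $a_{[K]\setminus\mathcal{K}}(x^n_{[K]\setminus\mathcal{K}},y^n)\prod_{j\in\mathcal{K}}\mP_{X_j}^{\otimes n}(x_j^n)$. Here $a_{[K]}=(1-\frac1n)\mP_{X_1\cdots X_K}^{\otimes n}\,\mathbb{I}(\cdot\in\mathcal{T})$ and the $a_{\mathcal{S}}$ are its marginals. The fictitious-codeword statistics are absorbed into the weight, and the outputs of wrongly decoded senders are drawn from independent marginals. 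As a result each \textit{Ck} holds as an exact identity.

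The leftover mass $\big(\mQ-\widetilde{\mZ}(x^n_{[K]}\mid w_{[K]},y^n)\big)/\prod_k M_k$ is added uniformly over $\widehat w_{[K]}$, which gives \textit{C0} and normalization for free. Non-negativity then requires $\widetilde{\mZ}(x^n_{[K]}\mid w_{[K]},y^n)\leq\mQ(x^n_{[K]})$. The left side contains product-of-marginals terms, which $\mP_{X_1\cdots X_K}^{\otimes n}$ does not dominate for dependent inputs. So in this construction the transmitter-side law $\mQ$ cannot be $\mP_{X_1\cdots X_K}^{\otimes n}$ as in your stage (i); the paper mixes in those product laws with total weight $\frac1n$.

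Your union-bound exponents do reappear, but as a validity condition rather than an error bound: $\prod_{j\in\mathcal{K}}(M_j-1)\,a_{[K]\setminus\mathcal{K}}\leq 2^{-n\gamma_{\mathcal{K}}}\mP_{X_{[K]\setminus\mathcal{K}}}^{\otimes n}$, which the $\frac1n$ weight dominates. Reliability is then automatic, with success probability at least $(1-\frac1n)\Pr\big((X^n_{[K]},Y^n)\in\mathcal{T}\big)$.
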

\noindent The proof of converse is presented in Appendix \ref{proof:converse}. The converse argument essentially follows from the two-sender converse (outer bound) in \cite[Corollary 42]{fawzi2024MAC}.

The proof of achievability is presented in Appendix  \ref{proof:achievability}. The special case for the two-sender setting is presented in Section \ref{proof:two_sender}. The proof first constructs an unnormalized sub-distribution which satisfies the NS conditions \textit{C1} and \textit{C2}. A normalization process then ensures that it satisfies \textit{C0} and is a valid pmf for sufficiently large $n$. Bearing some similarity to NS-assisted broadcast channels \cite{Yao_Jafar_NSSato}, the outputs $(X_1^n,X_2^n)$ of the scheme on the transmitters' side are generated according to a distribution $\mQ \in \mathcal{P}(\mathcal{X}_1\times \mathcal{X}_2)$ which does not depend on the messages. Therefore $X_1^n,X_2^n$ may be viewed as some random keys generated by the scheme, that are used to authenticate the receiver's input $Y^n$ later. A remarkable feature of the scheme is that the construction of $\mQ$ turns out to be a convex combination $\mQ = (1-\frac{1}{n})\mP_{X_1X_2}^{\otimes n} + \frac{1}{n}\mP_{X_1}^{\otimes n} \mP_{X_2}^{\otimes n}$. This is different from the distribution used by Fawzi and Ferm\'{e} in \cite{fawzi2024MAC}, where $(X_1^n,X_2^n)$ was generated according to $\mP_{X_1X_2}^{\otimes n}$ conditioned on $(X_1^n, X_2^n)$ being jointly typical. The factor $(1-\frac{1}{n})$ eventually translates to a value multiplied to the probability of successful decoding, and thus does not hurt the asymptotic performance, whereas the factor $\frac{1}{n}$ is sufficient to dominate any function that decays exponentially fast in $n$, so that it can satisfy all conditions required by an NS-assisted coding scheme.

\subsection{Resolving an open question in \cite{fawzi2024MAC}}
Reference \cite{fawzi2024MAC} studied the two-sender DM-MACs $\mN_{Y\mid X_1X_2}$ with NS assistance, and it was proved that the region in Theorem \ref{thm:NSMAC} (for $K=2$) matches $\mathcal{C}^{\overline{\NS}}(\mN_{Y\mid X_1X_2})$, where $\mathcal{C}^{\overline{\NS}}(\mN_{Y\mid X_1X_2})$ is defined as the capacity region for a set of relaxed non-signaling (relaxed-NS) coding schemes over channel $\mN_{Y\mid X_1X_2}$ \cite[Theorem 34]{fawzi2024MAC}. The definition of the relaxed-NS coding schemes is obtained by relaxing certain non-signaling constraints in the definition of non-signaling coding schemes. Consequently, the class of relaxed-NS coding schemes is a superset of NS-assisted coding schemes. It was left as an open problem in \cite{fawzi2024MAC} whether $\mathcal{C}^{\NS}(\mN_{Y\mid X_1X_2}) = \mathcal{C}^{\overline{\NS}}(\mN_{Y\mid X_1X_2})$, even for the binary-adder MAC, i.e., $Y=X_1+X_2$, where $X_k\in \{0,1\}, \forall k\in \{1,2\}$ and the addition is over integers. 
Theorem \ref{thm:NSMAC} answers this open question in the affirmative, showing that the two capacities coincide. As a direct corollary, for the binary-adder channel, the NS-assisted sum-rate capacity $C^{\NS}_{\Sigma}$ is thus shown to be equal to $\log_2(3)\approx 1.585$.

\subsection{Extremal improvement over classical capacity}\label{sec:extremal}
For any DM-MAC $\mN_{Y\mid X_1X_2\cdots X_K}$, Theorem \ref{thm:NSMAC} allows us to compare its NS-assisted capacity region $\mathcal{C}^{\NS}(\mN_{Y\mid X_1 X_2 \cdots X_K})$ with its classical capacity region $\mathcal{C}(\mN_{Y\mid X_1 X_2 \cdots X_K})$ \cite[Theorem 4.5]{NIT}. The difference between the two regions is evident, in that the allowed input distribution $\mP_{X_1X_2\cdots X_K}$ in $\mathcal{C}(\mN_{Y\mid X_1 X_2 \cdots X_K})$ must be independent across the senders, whereas in $\mathcal{C}^{\NS}(\mN_{Y\mid X_1 X_2 \cdots X_K})$ this can be chosen to be arbitrarily dependent across the senders. 

It is then natural to ask how much the two regions can differ from each other, and in particular, how large the extremal ratio of the sum-rate capacities $C^{\NS}_{\Sigma}/C_{\Sigma}$ can be. Reference \cite[Theorem 3]{Yao_Jafar_Unlock} proves that this ratio cannot be more than $K$ for any $K$-sender MAC.\footnote{This result seemingly contradicts the findings reported in \cite[Table II]{QMAC_Yun} by Yun et al. We clarify this issue in Appendix \ref{sec:yunetal} by pointing out a minor oversight in the calculation of the non-signaling assisted sum-rate capacity in \cite[Table II]{QMAC_Yun}.}  

Here we answer this question by showing that there exists a $K$-sender DM-MAC for which the ratio $C^{\NS}_{\Sigma}/C_{\Sigma}$ can be arbitrarily close to $K$. The channel is referred to as the \emph{sum-product-lock channel}, because the channel is \emph{locked} (output is erased) unless a certain sum of inputs matches a certain product of inputs. The definition is as follows.
\begin{definition}[Sum-product-lock channel] \label{def:sum_prod_lock_channel}
	Let $K\geq 2$ and let $q$ be a power of a prime. Let $\mathbb{F}_q$ denote the finite field of order $q$ and $\mathbb{F}_q^{\times} \triangleq \mathbb{F}_q \setminus \{0\}$.  The sum-product-lock channel has input alphabets $\mathcal{X}_k = \mathbb{F}_q^{\times} \times \mathbb{F}_q, \forall k\in [K]$, and output alphabet $\mathcal{Y} = (\mathbb{F}_q^{\times})^K \cup \{\phi\}$, where $\phi$ denotes an erasure symbol. Given that $X_k = (a_k,b_k)$ where $a_k\in \mathbb{F}_q^{\times}$ and $b_k\in \mathbb{F}_q, \forall k\in [K]$, the channel's law is that 
	\begin{align} \label{eq:def_channel_sum_prod_lock}
	Y = \begin{cases}
		(a_1,a_2,\ldots, a_K), & \mbox{if}~ \sum_{j=1}^K b_j = \prod_{i=1}^K a_i\\
		\phi, & \mbox{otherwise}
	\end{cases}.
\end{align}
\end{definition}
Let us note that the sum-product-lock channel in Definition \ref{def:sum_prod_lock_channel} is deterministic, and it belongs to a class of MACs that have been studied in the NS/entanglement-assisted literature \cite{Quek_Shor, leditzky2020playing, seshadri2023separation, QMAC_Yun}, in which a non-local game is transformed into a MAC. This idea is due to Quek and Shor \cite{Quek_Shor} and was later developed by Leditzky et al. \cite{leditzky2020playing}. Specifically, in a $K$-player non-local game, for each $k\in [K]$, Player $k$ answers $b_k$ based on its local input $a_k$. They win the game if and only if $b_1,b_2,\ldots,b_K$ and $a_1,a_2,\cdots, a_K$ satisfy a winning condition (the condition in \eqref{eq:def_channel_sum_prod_lock} as an example). In the corresponding MAC, both the questions $(a_1,a_2,\ldots,a_K)$ and the answers $(b_1,b_2,\ldots,b_K)$ are channel inputs chosen by the transmitters, and the channel decides whether or not to reveal (with certain probability) the questions to the receiver depending on whether the inputs satisfy that winning condition.

We have the following theorem.
\begin{theorem}\label{thm:sum_prod_lock}
	For the sum-product-lock channel defined in \eqref{eq:def_channel_sum_prod_lock}, the NS-assisted sum-rate capacity $C_{\Sigma}^{\NS} \geq K \log_2 (q-1)$, whereas its classical sum-rate capacity $C_{\Sigma} \leq \log_2 q + o_q(\log_2(q))$.
\end{theorem}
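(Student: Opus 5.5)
The plan has two parts: the NS lower bound, which follows from Theorem \ref{thm:NSMAC}, and the classical upper bound, which is a game-value argument.

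\emph{NS lower bound.} By Theorem \ref{thm:NSMAC}, for any joint pmf $\mP_{X_1\cdots X_K}$ the sum rate $I(X_1,\ldots,X_K;Y)$ is achievable. We take each $a_k$ i.i.d.\ uniform on $\mathbb{F}_q^{\times}$. Given $(a_1,\ldots,a_K)$, we choose $b_1,\ldots,b_{K-1}$ i.i.d.\ uniform on $\mathbb{F}_q$ and set $b_K=\prod_i a_i-\sum_{j<K}b_j$, so the channel never erases and $Y=(a_1,\ldots,a_K)$. The channel is deterministic, so $I(X_{[K]};Y)=H(Y)=K\log_2(q-1)$.

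We must also check that the other constraints of Theorem \ref{thm:NSMAC} admit a rate tuple with this sum. For each nonempty $\mathcal{K}$, conditioned on $X_{[K]\setminus\mathcal{K}}$, the $a_k$ with $k\in\mathcal{K}$ remain i.i.d.\ uniform and are still revealed, since the lock condition holds with probability one. Hence $I(X_{\mathcal{K}};Y\mid X_{[K]\setminus\mathcal{K}})\ge H(a_{\mathcal{K}}\mid X_{[K]\setminus\mathcal{K}})=|\mathcal{K}|\log_2(q-1)$. The symmetric point $R_k=\log_2(q-1)$ therefore satisfies every constraint, giving $C_\Sigma^{\NS}\ge K\log_2(q-1)$.

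\emph{Classical upper bound.} Classically the inputs are independent, so $C_\Sigma=\max_{\prod_k \mP_{X_k}} I(X_{[K]};Y)=\max H(Y)$. Let $E$ be the indicator that $Y\ne\phi$. Then
\[
H(Y)\le 1+\Pr(E=1)\,H(Y\mid E=1)\le 1+\Pr(E=1)\,K\log_2(q-1).
\]
This reduces the problem to bounding the winning probability of the sum-product game under product input distributions. The first observation is that, conditioned on all $a$'s and on $b_{-k}$, the answer $b_k$ is an independent random variable. Hence the winning probability is at most the classical value $\omega$ of the game with independent uniform-ish questions, and the relevant quantity is a tradeoff between $\Pr(E=1)$ and the entropy of revealed questions.

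\emph{Main obstacle.} The bound above is too crude: it needs $\Pr(E)\lesssim 1/K$, and we must handle nonuniform $a$-distributions. The better approach is to bound $H(Y)\le h(p)+p\,H(a_{[K]}\mid E=1)$ directly and use the following key lemma. For independent $a_k\sim\mP_{A_k}$ and deterministic strategies $b_k=f_k(a_k)$ (mixtures only help concavely, so deterministic suffices for bounding $p\,2^{H}$-type quantities), consider the event $\sum_k f_k(a_k)=\prod_k a_k$. Fixing $a_{-1}$, this is an equation in $a_1$,
\[
f_1(a_1)-c\,a_1=d, \qquad c=\prod_{i\ge2}a_i.
\]
For distinct values of $c$, the sets of solutions $a_1$ are constrained: if $a_1\ne a_1'$ both solve the equation with the same $c$, then $c$ is determined by $(a_1,a_1')$ and $f_1$. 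I would attempt a counting/collision argument in this spirit. It should show that the set of winning question tuples $S$ satisfies a bound of the form $\mu(S)\cdot|S|$-entropy tradeoff giving
\[
\Pr(E=1)\,H(a\mid E=1)\le \log_2 q+o(\log_2 q),
\]
for instance by showing that the winning tuples essentially lie on a "graph" determined by $K-1$ coordinates plus $O(1)$ bits. Making this rigorous for arbitrary product distributions, rather than uniform ones, is the step I expect to be hardest. I would first try to reduce it to the uniform case via a type/entropy-to-cardinality argument (restricting to near-uniform level sets of each $\mP_{A_k}$, losing $O(\log\log q)$ terms), and then use the algebraic collision bound over $\mathbb{F}_q$.
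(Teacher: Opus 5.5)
Your NS lower bound is correct and matches the paper's first proof. The independence of $A_{\mathcal{K}}$ from $X_{[K]\setminus\mathcal{K}}$, which you assert without argument, holds because some $B_j$ with $j\in\mathcal{K}$ is unobserved. Checking every constraint rather than only the sum is indeed necessary for dependent inputs.

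The classical upper bound, however, is not established. First, the reduction to deterministic answers $b_k=f_k(a_k)$ is unjustified, and the reason you give points the wrong way. For fixed $\mP_{X_{[K]\setminus\{k\}}}$, $H(Y)$ is concave in $\mP_{X_k}$. A randomized strategy is a mixture of deterministic ones with the same $\mP_{A_k}$, so it can have larger $H(Y)$ than every one of its components. A bound proved for deterministic strategies therefore does not transfer.

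The reduction is also unnecessary. The paper works with the non-erasure hypergraph $\mathcal{E}$ on the full alphabets $\mathcal{X}_k=\mathbb{F}_q^{\times}\times\mathbb{F}_q$. Fix $x_2,\dots,x_{K-1}$. Then $x_1\mapsto\bigl(a_1\prod_{i=2}^{K-1}a_i,\ b_1+\sum_{j=2}^{K-1}b_j\bigr)$ injectively maps inputs to points of $\mathbb{F}_q^2$, and $x_K\mapsto\{(x,y)\colon y=a_Kx-b_K\}$ injectively maps inputs to lines. Non-erasure is exactly point--line incidence, so the induced bipartite graph is $\mathcal{K}_{2,2}$-free however the inputs are randomized. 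Your collision remark (two distinct solutions determine $(c,d)$) is the seed of this, but only in the deterministic setting.

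The decisive missing step is turning this incidence structure into an entropy bound. Your ``key lemma'' is never stated, and the structure you aim for is the wrong one. If the winning tuples were a graph over $K-1$ coordinates, you would only get roughly $(K-1)\log_2 q$, not $\log_2 q$. What is needed is the reverse: all but one coordinate must be nearly deterministic.

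The paper gets this from K\H{o}v\'ari--S\'os--Tur\'an. Summing $e_c\le n_j+n_i\sqrt{n_j}$ over the fixed remaining coordinates gives $\rho\le 1/n_i+1/\sqrt{n_j}$ for every pair $i\neq j$. Hence in any box of density $\rho$, at most one class exceeds $4/\rho^2$.

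The paper then partitions each alphabet into probability levels, as you also propose, and splits each level block into two cases. If the non-erasure probability $\alpha$ is small, the block contributes $H_b(\alpha)+\alpha K\log_2 q=o_q(\log_2 q)$. Otherwise the block's density is at least $\alpha q^{-K\eta}$, so all but one class has $q^{o(1)}$ elements and hence $o_q(\log_2 q)$ entropy. The remaining sender reveals a single $\mathbb{F}_q^{\times}$ coordinate, that is, at most $\log_2 q$ bits.

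Without the ``at most one large class'' lemma and this two-case analysis, your proposal does not prove $C_\Sigma\le\log_2 q+o_q(\log_2 q)$.
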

\noindent The proof of $C^{\NS}_{\Sigma} \geq K\log_2(q-1)$ is presented in Appendix \ref{proof:sum_prod_lock_NSa}. The proof of $C_{\Sigma} \leq \log_2 q + o_q(\log_2 q)$ is presented in Appendix \ref{proof:sum_prod_lock_Cc}. For intuition, we provide the following proof sketch.
\begin{proof}[(Sketch)]
The proof that $C_{\Sigma}^{\NS} \geq K\log_2 (q-1)$ (in Appendix \ref{proof:sum_prod_lock_NSa}) is straightforward using Theorem \ref{thm:NSMAC}. In fact, an alternative proof (also presented in Appendix \ref{proof:sum_prod_lock_NSa}) shows that the sum-rate $K\log_2(q-1)$ is achievable with only transmitters' side NS assistance.
The challenging part is the proof of $C_{\Sigma} \leq \log_2 q +o_q(\log_2 q)$ (in Appendix \ref{proof:sum_prod_lock_Cc}). This is done by proving that, for any product distribution $\mP_{X_1}\mP_{X_2}\cdots \mP_{X_K}$, the output entropy $H(Y)$ is at most $\log_2 q+ o_q(\log_2 q)$, thus proving the bound on $C_{\Sigma}$. 
Specifically, we  associate with the channel a $K$-partite $K$-uniform hypergraph, where the $k^{th}$ vertex class consists of the possible inputs of Transmitter $k$, and a $K$-tuple of vertices forms a hyperedge if and only if the corresponding channel inputs result in a non-erasure output. 
Next, given any product input distribution, we partition each transmitter's input alphabet into probability-level classes. Inputs whose individual probabilities fall below a prescribed threshold are grouped into a residual class, whose total contribution to the output entropy is negligible. 
Now condition on any fixed choice of non-residual probability-level classes. This restricts each vertex class to its selected probability-level class, thereby inducing a subhypergraph. Define its edge density as the fraction of input tuples in the corresponding Cartesian product that result in non-erasure. 
If the conditional non-erasure probability is negligible, then the conditional entropy of the output is negligible as well. Otherwise, the induced subhypergraph has sufficiently large edge density, and it can be shown that all but at most one of the selected input classes have cardinality $q^{o(1)}$, due to a conditionally $\mathcal{K}_{2,2}$-free property (see Definition \ref{def:cond_K22free}). Consequently, conditioned on these probability-level classes, all but at most one of the transmitters have input entropy $o_q(\log_2 q)$. The remaining transmitter can only signal through one $\mathbb{F}_q^{\times}$-valued coordinate (in addition to the erasure symbol), and therefore contributes at most $\log_2 q+O_q(1)$ bits. Taking the average, the conditional entropy of $Y$ given the class indices is bounded by $\log_2 q+o_q(\log_2 q)$. Since the probability partition process ensures that the entropy of the class indices is $o_q(\log_2 q)$, we have an upper bound $H(Y)\leq \log_2 q+o_q(\log_2 q)$.
\end{proof}

\section{Achievability of Theorem \ref{thm:NSMAC}: Two-sender case} \label{proof:two_sender}
It suffices to prove that, for each $\mP_{X_1X_2}\in \mathcal{P}(\mathcal{X}_1\times \mathcal{X}_2)$, $(R_1,R_2)\in \mathbb{R}^2_{\geq 0}$ is achievable by NS-assisted coding schemes if
\begin{align}
	R_1 &< I(X_1;Y\mid X_2)\\
	R_2 &< I(X_2;Y\mid X_1) \\
	R_1+R_2 &< I(X_1,X_2;Y)
\end{align}
whenever $\min\{I(X_1;Y\mid X_2), I(X_2;Y\mid X_1)\} >0$. This avoids degenerate cases with fewer senders.
Together with time-sharing, this proves the achievability of Theorem \ref{thm:NSMAC} for the two-sender case. Henceforth, $\mP_{X_1X_2}$ is fixed to a pmf in $\mathcal{P}(\mathcal{X}_1\times \mathcal{X}_2)$.

\subsection{NS conditions for $K=2$}
For $K=2$, an $(M_1,M_2,n)$ NS-assisted coding scheme over $n$ uses of the MAC is specified by a conditional pmf $\mZ(x_1^n,x_2^n,(\widehat{w}_1,\widehat{w}_2)\mid w_1,w_2,y^n)$ for which the following set of conditions hold.
\begin{enumerate}[label=\textit{C\arabic*}:, start=0]
	\item $\sum_{\widehat{w}_1,\widehat{w}_2}\mZ(x_1^n,x_2^n,(\widehat{w}_1,\widehat{w}_2)\mid w_1,w_2, y^n)$ is invariant under changes of $y^n \in \mathcal{Y}^n$;
	\item $\sum_{x_1^n}\mZ(x_1^n,x_2^n,(\widehat{w}_1,\widehat{w}_2)\mid w_1,w_2,y^n)$ must be invariant under changes of $w_1$;
	\item $\sum_{x_2^n}\mZ(x_1^n,x_2^n,(\widehat{w}_1,\widehat{w}_2)\mid w_1,w_2,y^n)$ must be invariant under changes of $w_2$.
\end{enumerate}

\subsection{Symmetrization}
As a preliminary step, we adopt part of the symmetrization process in \cite{fawzi2024MAC} to reduce the number of parameters in $\mZ$ without affecting optimality. The idea is also referred to as  twirling in \cite{cubitt2011zero}. 

For $m\in \mathbb{N}$, let $\mathcal{S}_m$ denote the group of permutations from $[m] \to [m]$. It follows that $|\mathcal{S}_m| = m!$. 
Given any $(M_1,M_2,n)$ NS-assisted scheme $\mZ_o(x_1^n,x_2^n,(\widehat{w}_1,\widehat{w}_2)\mid w_1,w_2,y^n)$, define
\begin{align}
	&\mZ(x_1^n,x_2^n,(\widehat{w}_1,\widehat{w}_2) \mid w_1,w_2,y^n)\notag\\
	&\triangleq \frac{1}{M_1! M_2!}\sum_{\substack{\pi_1\in \mathcal{S}_{M_1}\\ \pi_2 \in \mathcal{S}_{M_2}}}\mZ_o(x_1^n,x_2^n,(\pi_1(\widehat{w}_1),\pi_2(\widehat{w}_2))\mid \pi_1(w_1),\pi_2(w_2),y^n)
\end{align}
In other words, $\mZ$ is a convex combination of (in total $M_1! M_2!)$ NS-assisted coding schemes, each of which is obtained from $\mZ_o$ by permuting the labels of $W_1, \widehat{W}_1$ by $\pi_1$, and the labels of $W_2,\widehat{W}_2$ by $\pi_2$. It can be readily verified that $\mZ$ is also an NS-assisted coding scheme, and that $P_{e,k}(\mZ) = P_{e,k}(\mZ_o), \forall k\in \{1,2\}$. Let us call $\mZ$ the twirled scheme, which is symmetric in that it adopts the form
\begin{align} \label{eq:twirl_1}
	&\mZ(x_1^n,x_2^n,(\widehat{w}_1,\widehat{w}_2) \mid w_1,w_2,y^n) \notag \\
	&=
	\begin{cases}
		z_{\cmark\!, \cmark}(x_1^n,x_2^n,y^n), & \mbox{if}~ \widehat{w}_1=w_1, \widehat{w}_2=w_2\\
		z_{\cmark\!, \xmark}(x_1^n,x_2^n,y^n), & \mbox{if}~ \widehat{w}_1=w_1, \widehat{w}_2\neq w_2\\ 
		z_{\xmark\!, \cmark}(x_1^n,x_2^n,y^n), & \mbox{if}~ \widehat{w}_1\neq w_1, \widehat{w}_2=w_2\\
		z_{\xmark\!, \xmark}(x_1^n,x_2^n,y^n), & \mbox{if}~ \widehat{w}_1\neq w_1, \widehat{w}_2\neq w_2
	\end{cases}
\end{align}
i.e., the value only depends on the indicators $\mathbb{I}(\widehat{w}_1=w_1)$, $\mathbb{I}(\widehat{w}_2=w_2)$, together with $(x_1^n,x_2^n,y^n)$. 
It follows that the marginal distribution 
\begin{align}
	&\mZ(x_1^n,x_2^n\mid w_1,w_2,y^n)\notag \\
	&= z_{\cmark\!, \cmark}(x_1^n,x_2^n,y^n)  \notag \\
	&~~~~+ (M_2-1)z_{\cmark\!, \xmark}(x_1^n,x_2^n,y^n) \notag \\
	&~~~~+ (M_1-1)z_{\xmark\!, \cmark}(x_1^n,x_2^n,y^n) \notag \\
	&~~~~+ (M_1-1)(M_2-1)z_{\xmark\!, \xmark}(x_1^n,x_2^n,y^n)
\end{align}
which does not depend on $(w_1,w_2)$, nor $y^n$ because of non-signaling.

Note that for $\mZ$, the probability of successfully decoding both messages is
\begin{align}
	&\Pr(\widehat{W}_1=W_1, \widehat{W}_2=W_2) \notag \\
	&\stackrel{\eqref{eq:joint_dist}}{=} \frac{1}{M_1M_2}\sum_{\substack{w_1,w_2\\ x_1^n,x_2^n,y^n}} \mZ(x_1^n,x_2^n,(w_1,w_2)\mid w_1,w_2, y^n) \times \mN^{\otimes n}_{Y\mid X_1X_2}(y^n\mid x_1^n,x_2^n) \\
	&= \sum_{x_1^n,x_2^n,y^n}z_{\cmark\!, \cmark}(x_1^n,x_2^n,y^n) \mN_{Y\mid X_1X_2}^{\otimes n}(y^n\mid x_1^n,x_2^n).
\end{align}

The remaining part of the proof is to construct reliable NS-assisted coding schemes with a special structure motivated from the twirled scheme.

\subsection{A useful sub-distribution}
Let $\epsilon \in (0,1)$. 
In the following, we write $\mathcal{T}^{(n)}_\epsilon(\mP_{X_1X_2}\mN_{Y\mid X_1X_2})$ simply as $\mathcal{T}^{(n)}_\epsilon$.
For each $(x_1^n, x_2^n, y^n)\in \mathcal{X}_1^n \times \mathcal{X}_2^n \times \mathcal{Y}^n$, let
\begin{align} \label{eq:def_a12}
	a_{12}(x_1^n,x_2^n,y^n) \triangleq \lambda_n \times \mP_{X_1X_2}^{\otimes n}(x_1^n,x_2^n)\times \mathbb{I}\big( (x_1^n,x_2^n,y^n) \in \mathcal{T}^{(n)}_\epsilon \big), 
\end{align}
where $\lambda_n \in [0,1]$, 
and
\begin{align} 
	& a_{1}(x_1^n,y^n) \triangleq \sum_{x_2^n} a_{12} (x_1^n,x_2^n,y^n), \label{eq:def_a1} \\
	& a_{2}(x_2^n,y^n) \triangleq \sum_{x_1^n} a_{12}(x_1^n,x_2^n,y^n), \label{eq:def_a2} \\
	& a(y^n) \triangleq \sum_{x_1^n,x_2^n} a_{12} (x_1^n,x_2^n,y^n). \label{eq:def_a}
\end{align}
The role of $\lambda_n$ will become clear later. It represents a penalty in the probability of successful decoding. Reliable decoding will only require that $\lambda_n \to 1$ as $n\to \infty$. Consequently, we have some freedom in choosing $\lambda_n$. It turns out that $\lambda_n = 1-\frac{1}{n}$ suffices for our purpose, and the key reason is that $\frac{1}{n}$ approaches $0$ much slower than $2^{-\gamma n}$ for any $\gamma >0$.

To construct a scheme $\mZ$ that satisfies the NS conditions \textit{C0}--\textit{C2}, we start by defining an (unnormalized) sub-distribution 
\begin{align} \label{eq:def_Ztilde}
	&\widetilde{\mZ}(x_1^n,x_2^n,(\widehat{w}_1,\widehat{w}_2)\mid w_1,w_2,y^n) \notag \\
	&\triangleq \begin{cases}
		a_{12}(x_1^n,x_2^n,y^n), & \mbox{if}~ \widehat{w}_1=w_1, \widehat{w}_2 = w_2\\
		a_{1}(x_1^n,y^n) \mP_{X_2}^{\otimes n}(x_2^n), & \mbox{if}~ \widehat{w}_1=w_1, \widehat{w}_2 \neq w_2\\
		a_{2}(x_2^n,y^n) \mP_{X_1}^{\otimes n}(x_1^n) , & \mbox{if}~ \widehat{w}_1\neq w_1, \widehat{w}_2 = w_2\\
		a(y^n) \mP_{X_1}^{\otimes n}(x_1^n) \mP_{X_2}^{\otimes n}(x_2^n), & \mbox{if}~ \widehat{w}_1\neq w_1, \widehat{w}_2 \neq w_2
	\end{cases}
\end{align}
Note that its value only depends on the indicators $\mathbb{I}(\widehat{w}_1=w_1)$, $\mathbb{I}(\widehat{w}_2=w_2)$, together with $(x_1^n,x_2^n,y^n)$. This is motivated by \eqref{eq:twirl_1}.

The value $\widetilde{\mZ}(x_1^n,x_2^n,(\widehat{w}_1,\widehat{w}_2)\mid w_1,w_2,y^n)$ is always non-negative. Moreover, the marginal sub-distribution
\begin{align}
	&\widetilde{\mZ}(x_2^n,(\widehat{w}_1,\widehat{w}_2)\mid w_1,w_2,y^n) \notag \\
	&\triangleq \sum_{x_1^n} \widetilde{\mZ} (x_1^n,x_2^n,(\widehat{w}_1,\widehat{w}_2)\mid w_1,w_2,y^n) \\
	&= \begin{cases}
		a_{2}(x_2^n,y^n), & \mbox{if}~ \widehat{w}_2 = w_2\\
		a(y^n) \mP_{X_2}^{\otimes n}(x_2^n), & \mbox{if}~ \widehat{w}_2 \neq w_2
	\end{cases}
\end{align}
and thus is invariant under changes of $w_1$. 
Similarly, 
\begin{align}
	& \widetilde{\mZ}(x_1^n,(\widehat{w}_1,\widehat{w}_2)\mid w_1,w_2,y^n) \notag \\
	&\triangleq\sum_{x_2^n} \widetilde{\mZ} (x_1^n,x_2^n,(\widehat{w}_1,\widehat{w}_2)\mid w_1,w_2,y^n) \\
	&= \begin{cases}
		a_{1}(x_1^n,y^n), & \mbox{if}~ \widehat{w}_1 = w_1\\
		a(y^n) \mP_{X_1}^{\otimes n}(x_1^n), & \mbox{if}~ \widehat{w}_1 \neq w_1
	\end{cases}
\end{align}
and thus is invariant under changes of $w_2$. Therefore, the sub-distribution $\widetilde{\mZ}$ satisfies \textit{C1} and \textit{C2}. However, $\widetilde{\mZ}$ may not satisfy \textit{C0}. It may also violate the normalization condition that the sum over $x_1^n,x_2^n,\widehat{w}_1,\widehat{w}_2$ should be equal to $1$ for every $(w_1,w_2,y^n)$. 

\subsection{Normalizing the distribution}
To tackle these problems, let
\begin{align} \label{eq:def_Z}
	&\mZ(x_1^n,x_2^n,(\widehat{w}_1,\widehat{w}_2)\mid w_1,w_2,y^n) \notag \\
	&\triangleq \widetilde{\mZ}(x_1^n,x_2^n,(\widehat{w}_1,\widehat{w}_2)\mid w_1,w_2,y^n) \notag \\
	&~~~~+ \frac{\mQ(x_1^n,x_2^n)- \widetilde{\mZ}(x_1^n,x_2^n\mid w_1,w_2,y^n)}{M_1M_2} 
\end{align}
where $\mQ\in \mathcal{P}(\mathcal{X}_1^n \times \mathcal{X}_2^n)$ is a valid pmf to be chosen later, and $\widetilde{\mZ}(x_1^n,x_2^n\mid w_1,w_2,y^n)$ is the marginal (unnormalized) distribution obtained by  summing over all $(\widehat{w}_1,\widehat{w}_2)$ in $\widetilde{\mZ}(x_1^n,x_2^n,(\widehat{w}_1,\widehat{w}_2)\mid w_1,w_2,y^n)$.
The form of $\mZ$ in \eqref{eq:def_Z} then satisfies \textit{C0}. To see this, note that 
\begin{align}
	&\sum_{\widehat{w}_1, \widehat{w}_2} \mZ(x_1^n,x_2^n,(\widehat{w}_1,\widehat{w}_2)\mid w_1,w_2,y^n) \notag \\
	&= \widetilde{\mZ}(x_1^n,x_2^n \mid w_1,w_2,y^n) + \mQ(x_1^n,x_2^n) - \widetilde{\mZ}(x_1^n,x_2^n \mid w_1,w_2,y^n) \\
	&= \mQ(x_1^n,x_2^n) \label{eq:Qoperational}
\end{align}
which is invariant under changes of $y^n$ (and also $w_1,w_2$). It can be readily verified that $\mZ$ also satisfies \textit{C1} and \textit{C2}, based on the fact that $\widetilde{\mZ}$ satisfies \textit{C1} and \textit{C2}. Meanwhile, $\mZ$ satisfies the normalization condition, that $\sum_{x_1^n,x_2^n, \widehat{w}_1, \widehat{w}_2} \mZ(x_1^n,x_2^n,(\widehat{w}_1,\widehat{w}_2)\mid w_1,w_2,y^n) = \sum_{x_1^n,x_2^n} \mQ(x_1^n,x_2^n) = 1$. 

Let us point out that \eqref{eq:Qoperational} shows that $\mQ(x_1^n,x_2^n)$ is the distribution that the scheme uses to generate $(X_1^n,X_2^n)$. Since it is invariant under changes of $(w_1,w_2,y)$, $(X_1^n,X_2^n)$ can be thought of as a pair of (correlated) keys, generated independently of the messages, that are sent through the MAC. 
Moreover, provided that $\mZ$ is a valid scheme, for every $(x_1^n,x_2^n)$ on the support of $\mQ$, the chain rule implies that
\begin{align}
	& \mZ(\widehat{w}_1,\widehat{w}_2\mid w_1,w_2,x_1^n,x_2^n,y^n) \notag \\
	&= \frac{\mZ(x_1^n,x_2^n,(\widehat{w}_1,\widehat{w}_2)\mid w_1,w_2,y^n)}{\mQ(x_1^n,x_2^n)}\\
	&= \frac{\widetilde{\mZ}(x_1^n,x_2^n,(\widehat{w}_1,\widehat{w}_2)\mid w_1,w_2,y^n)}{\mQ(x_1^n,x_2^n)} \notag \\
	&~~~~+ \frac{1-\widetilde{\mZ}(x_1^n,x_2^n\mid w_1,w_2,y^n)/\mQ(x_1^n,x_2^n)}{M_1M_2}
\end{align}
This is the probability that the scheme outputs $\widehat{W}_1= \widehat{w}_1, \widehat{W}_2=\widehat{w}_2$, conditioned on the receiver's input $y^n$, and the keys being $(x_1^n,x_2^n)$. Note that the value only depends on the indicators $\mathbb{I}(\widehat{w}_1=w_1)$ and $\mathbb{I}(\widehat{w}_2=w_2)$, together with $(x_1^n,x_2^n,y^n)$. 

\subsection{Validity of the solution}
Note that $\mZ$ satisfies \textit{C0}--\textit{C2} and the normalization condition. For $\mZ$ to be a valid pmf (and thus a valid NS scheme), it remains to satisfy the non-negativity condition, that $\mZ(x_1^n,x_2^n,(\widehat{w}_1,\widehat{w}_2)\mid w_1,w_2,y^n)\geq 0$. According to \eqref{eq:def_Z}, it suffices to have 
\begin{align} \label{eq:cond_ng}
	 \widetilde{\mZ}(x_1^n,x_2^n\mid w_1,w_2,y^n) \leq \mQ(x_1^n,x_2^n) 
\end{align}
for all $x_1^n,x_2^n, w_1,w_2,y^n$.

To this end, let us first observe that
\begin{align}
& \widetilde{\mZ}(x_1^n,x_2^n\mid w_1,w_2,y^n) \notag \\
	&\triangleq \sum_{\widehat{w}_1, \widehat{w}_2} \widetilde{\mZ}(x_1^n,x_2^n,(\widehat{w}_1,\widehat{w}_2) \mid w_1,w_2,y^n)\notag\\
	&\stackrel{\eqref{eq:def_Ztilde}}{=} a_{12}(x_1^n,x_2^n,y^n) \notag \\
	&~~~~+ (M_2-1) a_1(x_1^n,y^n) \mP_{X_2}^{\otimes n}(x_2^n) \notag \\
	&~~~~+ (M_1-1) a_2(x_2^n,y^n) \mP_{X_1}^{\otimes n}(x_1^n) \notag \\
	&~~~~+ (M_1-1)(M_2-1) a(y^n) \mP_{X_1}^{\otimes n}(x_1^n) \mP_{X_2}^{\otimes n}(x_2^n).  \label{eq:expansion}
\end{align} 
Then, let us bound $a_1(x_1^n,y^n)$ as follows.
\begin{align}
	&\hspace{-10pt} a_1(x_1^n,y^n) \notag\\
	&\hspace{-10pt} \stackrel{\eqref{eq:def_a12},\eqref{eq:def_a1}}{=} \sum_{x_2^n} \lambda_n \mP_{X_1X_2}^{\otimes n}(x_1^n,x_2^n)\times \mathbb{I}\big( (x_1^n,x_2^n,y^n) \in \mathcal{T}^{(n)}_\epsilon \big) \\
	&\leq  \sum_{x_2^n} \mP_{X_1X_2}^{\otimes n}(x_1^n,x_2^n)\times \mathbb{I}\big( (x_1^n,x_2^n,y^n) \in \mathcal{T}^{(n)}_\epsilon \big) \label{eq:use_lambda_bound} \\
	&=\mP_{X_1}^{\otimes n}(x_1^n) \sum_{x_2^n} \mP_{X_2\mid X_1}^{\otimes n}(x_2^n \mid x_1^n)\times \mathbb{I}\big( (x_1^n,x_2^n,y^n) \in \mathcal{T}^{(n)}_\epsilon \big)\\
	&=\mP_{X_1}^{\otimes n}(x_1^n) \times \Pr\big( (x_1^n,X_2^n,y^n) \in   \mathcal{T}^{(n)}_\epsilon \big) \label{eq:def_X2} \\
	&\leq \mP_{X_1}^{\otimes n}(x_1^n) \times 2^{-n(I(X_2;Y\mid X_1)-\delta_1(\epsilon))} \label{eq:a1_bound}
\end{align}
where Step \eqref{eq:use_lambda_bound} follows as $\lambda_n\leq 1$. In Step \eqref{eq:def_X2},  $X_{2}^n \sim  \mP_{X_{2}\mid X_1}^{\otimes n}(x_2^n \mid x_1^n)$. Step \eqref{eq:a1_bound} follows from the joint typicality lemma \cite[Sec. 2.5.1]{NIT}, where $\delta_1(\epsilon)\geq 0$ tends to $0$  as $\epsilon \to 0$. 

Similarly, 
\begin{align}
	a_2(x_2^n,y^n) \leq \mP_{X_2}^{\otimes n}(x_2^n) \times 2^{-n(I(X_1;Y\mid X_2)-\delta_2(\epsilon))} \label{eq:a2_bound}
\end{align}
where $\delta_2(\epsilon)\geq 0$ tends to $0$  as $\epsilon \to 0$.

In addition,
\begin{align}
	  a(y^n) & \stackrel{\eqref{eq:def_a12}, \eqref{eq:def_a}}{=} \lambda_n\sum_{x_1^n,x_2^n} \mP_{X_1X_2}^{\otimes n}(x_1^n,x_2^n) \mathbb{I}\big( (x_1^n,x_2^n,y^n) \in \mathcal{T}^{(n)}_\epsilon \big)\\
	&\leq  \Pr\big( (X_1^n,X_2^n,y^n) \in \mathcal{T}^{(n)}_\epsilon \big) \\
	&\leq 2^{-n(I(X_1,X_2; Y) -\delta(\epsilon))}\label{eq:a_bound}
\end{align}
where $(X_1^n,X_2^n)\sim \mP_{X_1X_2}^{\otimes n}$, and $\delta(\epsilon)\geq 0$ tends to $0$ as $\epsilon \to 0$, as implied by the joint typicality lemma.

Going back to \eqref{eq:expansion}, it follows from \eqref{eq:a1_bound}, \eqref{eq:a2_bound} and \eqref{eq:a_bound} that 
\begin{align}
	&\widetilde{\mZ}(x_1^n,x_2^n\mid w_1,w_2,y^n) \\
	&\leq a_{12}(x_1^n,x_2^n,y^n) + \mP_{X_1}^{\otimes n}(x_1^n)\mP_{X_2}^{\otimes n}(x_2^n) \times \notag \\
	&~~~~~~ \big( (M_2-1) 2^{-n(I(X_2;Y\mid X_1)-\delta_1(\epsilon))} \notag\\
	&~~~~~~+(M_1-1) 2^{-n(I(X_1;Y\mid X_2)-\delta_2(\epsilon))} \notag \\
	&~~~~~~+(M_1-1)(M_2-1) 2^{-n(I(X_1,X_2;Y)-\delta(\epsilon))}
	\big) \label{eq:bound_Ztilde}
\end{align}

From now on, restrict $\epsilon >0$ to be sufficiently small so that
\begin{align}
	&I(X_2;Y\mid X_1) - \delta_1(\epsilon) > 0,\\
	&I(X_1;Y\mid X_2) - \delta_2(\epsilon) > 0,\\
	&I(X_1,X_2;Y) - \delta(\epsilon) > 0.
\end{align}
For any $(R_1,R_2)\in \mathbb{R}_{\geq 0}^{2}$ such that
\begin{equation} \label{eq:rate_tuple}
\begin{aligned}
	R_1 &< I(X_1;Y\mid X_2) - \delta_2(\epsilon), \\
	R_2 &< I(X_2;Y\mid X_1) - \delta_1(\epsilon), \\
	R_1+R_2 &< I(X_1,X_2; Y) - \delta(\epsilon),
\end{aligned}
\end{equation}
by picking
\begin{align}
	M_1^{(n)} \triangleq \lfloor 2^{nR_1} \rfloor, ~~~~ M_2^{(n)} \triangleq \lfloor 2^{nR_2} \rfloor,
\end{align}
it is guaranteed that, as $n\to \infty$,
\begin{align}
	\frac{\log_2\big( M_k^{(n)} \big)}{n} \to R_k, ~~~~ \forall k\in \{1,2\}
\end{align}
and that, according to \eqref{eq:bound_Ztilde},
\begin{align}
	&\widetilde{\mZ}(x_1^n,x_2^n\mid w_1,w_2,y^n) \notag \\
	&\leq a_{12}(x_1^n,x_2^n,y^n) + 2^{-\gamma n}  \mP_{X_1}^{\otimes n}(x_1^n)\mP_{X_2}^{\otimes n}(x_2^n) \label{eq:bound_Ztilde_2}
\end{align}
for some $\gamma >0$.

In order to satisfy \eqref{eq:cond_ng}, we choose
\begin{align}
	\lambda_n \triangleq 1-\frac{1}{n},
\end{align}
and
\begin{align}
	\mQ(x_1^n,x_2^n) &\triangleq \lambda_n \mP_{X_1X_2}^{\otimes n} (x_1^n,x_2^n) + (1-\lambda_n) \mP_{X_1}^{\otimes n}(x_1^n)\mP_{X_2}^{\otimes n}(x_2^n)\\
	&=\Big(1-\frac{1}{n}\Big) \mP_{X_1X_2}^{\otimes n} (x_1^n,x_2^n) + \frac{1}{n} \mP_{X_1}^{\otimes n}(x_1^n)\mP_{X_2}^{\otimes n}(x_2^n)
\end{align}
The definition makes $\mQ$ a valid pmf as it is a convex combination of two valid pmfs.
According to the definition in \eqref{eq:def_a12},
\begin{align}
	&a_{12}(x_1^n,x_2^n,y^n) \notag \\
	&\leq \lambda_n \mP_{X_1X_2}^{\otimes n} (x_1^n,x_2^n)\\
	&= \Big(1-\frac{1}{n}\Big) \mP_{X_1X_2}^{\otimes n} (x_1^n,x_2^n)
\end{align}
Together with \eqref{eq:bound_Ztilde_2}, for sufficiently large $n$ (for which $\frac{1}{n}\geq 2^{-\gamma n}$), we  have that
\begin{align}
	\widetilde{\mZ}(x_1^n,x_2^n\mid w_1,w_2,y^n) \leq \mQ(x_1^n,x_2^n),
\end{align}
which meets the desired condition \eqref{eq:cond_ng} for non-negativity. 
Therefore, for every sufficiently small $\epsilon$, the above construction yields valid NS-assisted coding schemes $\mZ^{(n)}$ for all sufficiently large $n$.

\subsection{Reliability analysis}
We now analyze the probability of error of the scheme described above. 
Let us calculate the probability of both messages being decoded correctly. For sufficiently large $n$,
\begin{align}
	&\Pr(\widehat{W}_1=W_1, \widehat{W}_2=W_2) \notag \\
	&\stackrel{\eqref{eq:joint_dist}}{=} \frac{1}{M_1M_2}\sum_{\substack{w_1,w_2\\ x_1^n,x_2^n,y^n}} \mZ(x_1^n,x_2^n,(w_1,w_2)\mid w_1,w_2, y^n) \times \mN^{\otimes n}_{Y\mid X_1X_2}(y^n\mid x_1^n,x_2^n) \\
	& \stackrel{\eqref{eq:def_Z}}{\geq} \frac{1}{M_1M_2} \sum_{\substack{w_1,w_2\\ x_1^n,x_2^n,y^n}} \widetilde{\mZ}(x_1^n,x_2^n,(w_1,w_2)\mid w_1,w_2, y^n) \times \mN^{\otimes n}_{Y\mid X_1X_2}(y^n\mid x_1^n,x_2^n)\\
	& \stackrel{\eqref{eq:def_Ztilde}}{=} \sum_{x_1^n,x_2^n,y^n}a_{12}(x_1^n,x_2^n,y^n) \times \mN^{\otimes n}_{Y\mid X_1X_2}(y^n\mid x_1^n,x_2^n) \\
	& \stackrel{\eqref{eq:def_a12}}{=} \Big(1-\frac{1}{n}\Big) \Pr\big( (X_1^n,X_2^n,Y^n) \in \mathcal{T}_{\epsilon}^{(n)} \big)
\end{align}
where $(X_1^n,X_2^n,Y^n)\sim \mP_{X_1X_2}^{\otimes n}\mN_{Y\mid X_1X_2}^{\otimes n}$. The law of large numbers implies that $\Pr\big( (X_1^n,X_2^n,Y^n) \in \mathcal{T}_{\epsilon}^{(n)} \big) \to 1$ as $n\to \infty$. Therefore, the probability of decoding both messages correctly $\to 1$ as $n\to \infty$. It follows that $\lim_{n \to \infty} P_{e,k}(\mZ^{(n)}) = 0, \forall k\in \{1,2\}$. The rate tuples $(R_1,R_2)$ thus achieved are exactly described by \eqref{eq:rate_tuple}. Taking $\epsilon \to 0$ concludes the proof. \hfil \qed

\section{Conclusion}
The complete characterization of the NS-assisted capacity regions for two of the canonical building blocks of network information theory --- the BC in \cite{Yao_Jafar_NSSato} and now the MAC in this work, highlights the tractability of a NS-extension to classical network information theory, and bodes well for the  development of a comprehensive theory along these lines. The extremal value $K$, of the multiplicative gain in capacity due to NS-assistance across all $K$-user MACs and also across all $K$-user BCs, hints at the intriguing potential residing in non-local correlations to help significantly surpass the classical capacity limits, further underscoring the need to develop a NS-assisted network information theory. Among the questions that are prompted by this work, the NS-assisted MAC capacity region remains open when the NS-assistance is available only to the transmitters. In particular, while a multi-letter characterization of the capacity region is known for the two-sender setting \cite[Proposition 2]{QMAC_Yun}, a single-letter characterization is not known. Another related open question is to determine the NS-assisted capacity region for the MAC \emph{with state} when causal or non-causal CSIT is available to the transmitters. The question is especially intriguing for causal CSIT in light of the discovery in \cite{Yao_Jafar_Unlock} of exponential/unbounded capacity gains from quantum-entanglement assistance in this setting. The prospect of large quantum-assisted capacity gains in the MAC without state also remains open.

\appendix
\section{Proof of Theorem \ref{thm:NSMAC}: Achievability} \label{proof:achievability}
We generalize the proof in Section \ref{proof:two_sender} to the $K$-sender case.
We will prove that, for each $\mP_{X_1X_2\cdots X_K} \in \mathcal{P}(\mathcal{X}_1\times \mathcal{X}_2\times \cdots \times \mathcal{X}_K)$, $(R_1,R_2,\ldots, R_K) \in \mathbb{R}_{\geq 0}^K$ is achievable by NS-assisted coding schemes if
\begin{align}
	\sum_{k\in \mathcal{K}} R_k < I(X_{\mathcal{K}};Y\mid X_{[K]\setminus \mathcal{K}}), ~~~~ \forall \emptyset \neq \mathcal{K} \subseteq [K]
\end{align} 
whenever $\min_{k\in [K]}I(X_k;Y\mid X_{[K]\setminus \{k\}}) >0$. This avoids degenerate cases with fewer senders. Together with time-sharing, this proves the achievability of Theorem \ref{thm:NSMAC}. 
Henceforth, $\mP_{X_1X_2\cdots X_K}$ is fixed to a pmf in $\mathcal{P}(\mathcal{X}_1\times \mathcal{X}_2 \times \cdots \times \mathcal{X}_K)$.

\subsection{A useful sub-distribution}
Let $\epsilon \in (0,1)$. 
In the following, we write $\mathcal{T}^{(n)}_\epsilon(\mP_{X_1X_2\cdots X_K}\mN_{Y\mid X_1X_2\cdots X_K})$ simply as $\mathcal{T}^{(n)}_\epsilon$.
For each $(x_1^n,x_2^n,\ldots, x_K^n,y^n) \in \mathcal{X}_1^n\times \mathcal{X}_2^n\times \cdots \times \mathcal{X}_K^n \times \mathcal{Y}^n$, let
\begin{align} \label{eq:def_a12K}
	&a_{[K]}(x_1^n,x_2^n,\ldots, x_K^n,y^n) \notag \\
	&\triangleq \Big(1-\frac{1}{n}\Big) \times \mP_{X_1X_2\cdots X_K}^{\otimes n}(x_1^n,x_2^n,\ldots, x_K^n) \notag \\
	&~~~~~~~~\times \mathbb{I}\big( (x_1^n,x_2^n,\ldots, x_K^n,y^n) \in \mathcal{T}_{\epsilon}^{(n)} \big).
\end{align}

For each proper subset $\mathcal{S} \subset [K]$, let
\begin{align} \label{eq:def_a_set}
	a_{\mathcal{S}}(x_{\mathcal{S}}^n,y^n) \triangleq \sum_{x^n_{[K]\setminus \mathcal{S}}} a_{[K]}(x_1^n,x_2^n,\ldots, x_K^n,y^n),
\end{align}
where $x_{\mathcal{S}}^n$ is a compact notation for $(x_k^n)_{k\in \mathcal{S}}$.
In particular,
\begin{align}
	a_{\emptyset}(y^n) \triangleq \sum_{x_1^n, x_2^n,\ldots, x_K^n} a_{[K]}(x_1^n,x_2^n,\ldots, x_K^n,y^n).
\end{align}

Then, define an unnormalized distribution
\begin{align} \label{eq:def_Ztilde_K}
	&\widetilde{\mZ}(x_1^n,\ldots, x_K^n, (\widehat{w}_1,\ldots, \widehat{w}_K) \mid w_1,\ldots, w_K, y^n)\notag \\
	&\triangleq 
	\begin{cases}
	a_{[K]}, & \mbox{if}~ (\widehat{w}_k = w_k)_{k\in [K]}\\
	~~~~~~~~~~~~\vdots & ~~~~~~~~~~~~~~~~~~~~ \vdots \\
		a_{[K]\setminus \mathcal{K}} \prod_{j\in \mathcal{K}}\mP_{X_{j}}^{\otimes n} , & \mbox{if}~ (\widehat{w}_k=w_k)_{k\in [K]\setminus \mathcal{K}}, (\widehat{w}_{j} \neq w_{j})_{j\in  \mathcal{K}}\\
	~~~~~~~~~~~~\vdots & ~~~~~~~~~~~~~~~~~~~~ \vdots\\
	a_{\emptyset} \prod_{j\in [K]} \mP_{X_j}^{\otimes n}, & \mbox{if}~ (\widehat{w}_j \neq w_j)_{j\in [K]}
	\end{cases}
\end{align}
where, to avoid cumbersome notation, the parameters $(x_1^n,\ldots, x_K^n,y^n)$ are suppressed in $a_{[K]}$, the parameters $(x_{[K]\setminus \mathcal{K}},y^n)$ are suppressed in $a_{[K]\setminus \mathcal{K}}$, the parameter $(y^n)$ is suppressed in $a_{\emptyset}$, and $(x_j^n)$ is suppressed in $\mP_{X_j}^{\otimes n}$.

We claim that $\widetilde{\mZ}$ satisfies \textit{C1} through \textit{CK}. Due to symmetry it suffices to prove it for \textit{C1} and for the case $(\widehat{w}_2=w_2,\cdots, \widehat{w}_{\ell} = w_{\ell}, \widehat{w}_{\ell+1} \neq w_{\ell+1},\ldots, \widehat{w}_K \neq w_K)$. This is done as follows.
\begin{align}
	&\widetilde{\mZ}(x_2^n,\ldots, x_K^n,(\widehat{w}_1,\ldots, \widehat{w}_K)\mid w_1,\ldots, w_K,y^n) \notag\\
	&\triangleq \sum_{x_1^n} \widetilde{\mZ}(x_1^n,x_2^n,\ldots, x_K^n,(\widehat{w}_1,\ldots, \widehat{w}_K)\mid w_1,\ldots, w_K,y^n)\\
	&= a_{\{2,\ldots, \ell\}}(x_2^n,\ldots, x_{\ell}^n,y^n) \prod_{j=\ell+1}^K \mP_{X_j}^{\otimes n}(x_j^n)
\end{align}
regardless of whether $\widehat{w}_1=w_1$ or $\widehat{w}_1\neq w_1$, and thus it is invariant under changes of $w_1$.

 \subsection{Normalizing the distribution}
Let
\begin{align} \label{eq:def_Z_K}
	&\mZ(x_1^n,\ldots,x_K^n,(\widehat{w}_1,\ldots,\widehat{w}_K)\mid w_1,\ldots,w_K,y^n) \notag \\
	&\triangleq \widetilde{\mZ}(x_1^n,\ldots,x_K^n,(\widehat{w}_1,\ldots,\widehat{w}_K)\mid w_1,\ldots,w_K,y^n) \notag \\
	&~~~~+ \frac{\mQ(x_1^n,\ldots,x_K^n)- \widetilde{\mZ}(x_1^n,\ldots,x_K^n\mid w_1,\ldots,w_K,y^n)}{M_1M_2\cdots M_K} 
\end{align}
where $\mQ\in \mathcal{P}(\mathcal{X}_1^n \times \cdots \times \mathcal{X}_K^n)$ is a valid pmf to be chosen later, and $\widetilde{\mZ}(x_1^n,\ldots,x_K^n\mid w_1,\ldots,w_K,y^n)$ is the marginal (unnormalized) distribution obtained by  summing over all $(\widehat{w}_1,\ldots,\widehat{w}_K)$ in $\widetilde{\mZ}(x_1^n,\ldots,x_K^n,(\widehat{w}_1,\ldots,\widehat{w}_K)\mid w_1,\ldots,w_K,y^n)$.
The form of $\mZ$ in \eqref{eq:def_Z_K} then satisfies \textit{C0}. To see this, note that 
\begin{align}
	&\sum_{\widehat{w}_1,\ldots, \widehat{w}_K} \mZ(x_1^n,\ldots,x_K^n,(\widehat{w}_1,\ldots,\widehat{w}_K)\mid w_1,\ldots,w_K,y^n) \notag \\
	&= \widetilde{\mZ}(x_1^n,\ldots,x_K^n \mid w_1,\ldots,w_K,y^n) + \mQ(x_1^n,\ldots,x_K^n) - \widetilde{\mZ}(x_1^n,\ldots,x_K^n \mid w_1,\ldots,w_K,y^n) \\
	&= \mQ(x_1^n,\ldots,x_K^n)
\end{align}
which is invariant under changes of $y^n$ (and also $w_1,\ldots,w_K$). It can be readily verified that $\mZ$ also satisfies \textit{C1} through \textit{CK}, based on the fact that $\widetilde{\mZ}$ satisfies \textit{C1} through \textit{CK}. Meanwhile, $\mZ$ satisfies the normalization condition, that $\sum_{x_1^n,\ldots,x_K^n, \widehat{w}_1,\ldots, \widehat{w}_K} \mZ(x_1^n,\ldots,x_K^n,(\widehat{w}_1,\ldots,\widehat{w}_K)\mid w_1,\ldots,w_K,y^n) = \sum_{x_1^n,\ldots,x_K^n} \mQ(x_1^n,\ldots,x_K^n) = 1$. 

\subsection{Validity of the solution}
Note that $\mZ$ satisfies \textit{C0},\textit{C1}, $\ldots$, \textit{CK} and the normalization condition. For $\mZ$ to be a valid pmf (and thus an NS scheme), it remains to satisfy the non-negativity condition, that $$\mZ(x_1^n,\ldots,x_K^n,(\widehat{w}_1,\ldots,\widehat{w}_K)\mid w_1,\ldots,w_K,y^n)\geq 0.$$ According to \eqref{eq:def_Z_K}, it suffices to have 
\begin{align} \label{eq:cond_ng_K}
	 \widetilde{\mZ}(x_1^n,\ldots,x_K^n\mid w_1,\ldots,w_K,y^n) \leq \mQ(x_1^n,\ldots,x_K^n) 
\end{align}
for all $x_1^n,\ldots,x_K^n, w_1,\ldots,w_K,y^n$.

To this end, let us first observe that
\begin{align}
& \widetilde{\mZ}(x_1^n,\ldots,x_K^n\mid w_1,\ldots,w_K,y^n) \notag \\
	&\triangleq \sum_{\widehat{w}_1, \ldots, \widehat{w}_K} \widetilde{\mZ}(x_1^n,\ldots,x_K^n,(\widehat{w}_1,\ldots,\widehat{w}_K) \mid w_1,\ldots,w_K,y^n) \\
	&\stackrel{\eqref{eq:def_Ztilde_K}}{=} a_{[K]}(x_1^n,\ldots, x_K^n,y^n) \notag \\
	&~~~~+ \sum_{\emptyset \neq \mathcal{K} \subseteq [K]} a_{[K]\setminus \mathcal{K}}(x_{[K]\setminus \mathcal{K}}^n,y^n) \prod_{j\in \mathcal{K}} (M_j-1)  \mP_{X_j}^{\otimes n}(x_j^n).  \label{eq:expansion_K}
\end{align}
Let us then bound $a_{[K]\setminus \mathcal{K}}(x_{[K]\setminus \mathcal{K}}^n,y^n)$ for each non-empty $\mathcal{K}\subseteq [K]$ as follows.
\begin{align}
	&a_{[K]\setminus \mathcal{K}}(x_{[K]\setminus \mathcal{K}}^n,y^n) \notag \\
	&\leq  \sum_{x_{\mathcal{K}}^n} \mP_{X_1\cdots X_K}^{\otimes n}(x_1^n,\ldots, x_K^n) \notag \\
	&\hspace{2cm} \times \mathbb{I}\big( (x_1^n,\ldots, x_K^n,y^n) \in \mathcal{T}_{\epsilon}^{(n)} \big) \label{eq:use_lambda_bound_K} \\
	&=\mP_{X_{[K]\setminus \mathcal{K}}}^{\otimes n}(x_{[K]\setminus \mathcal{K}}^n) \times   \sum_{x_{\mathcal{K}}^n}  \mP^{\otimes n}_{X_{\mathcal{K}}\mid X_{[K]\setminus \mathcal{K}}}(x_{\mathcal{K}}^n \mid x_{[K]\setminus \mathcal{K}}^n) \notag \\
	& \hspace{2cm} \times \mathbb{I}\big( (x_1^n,\ldots, x_K^n,y^n) \in \mathcal{T}_{\epsilon}^{(n)} \big)\\
	&\leq \mP_{X_{[K]\setminus \mathcal{K}}}^{\otimes n}(x_{[K]\setminus \mathcal{K}}^n) \times 2^{-n(I(X_{\mathcal{K}};Y\mid X_{[K]\setminus \mathcal{K}})-\delta_{[K]\setminus \mathcal{K}}(\epsilon))}. \label{eq:use_jt_K}
\end{align}
Step \eqref{eq:use_lambda_bound_K} follows from the definitions \eqref{eq:def_a12K} and \eqref{eq:def_a_set}. Step \eqref{eq:use_jt_K} follows from the joint typicality lemma, where $\delta_{[K]\setminus \mathcal{K}}(\epsilon)\geq 0$ tends to $0$ as $\epsilon \to 0$.

Going back to \eqref{eq:expansion_K}, we have
\begin{align}
	&\widetilde{\mZ}(x_1^n,\ldots,x_K^n\mid w_1,\ldots,w_K,y^n) \\
	&\leq a_{[K]}(x_1^n,\ldots, x_K^n,y^n) \notag \\
	&~~~~+ \sum_{\emptyset \neq \mathcal{K} \subseteq [K]} \mP_{X_{[K]\setminus \mathcal{K}}}^{\otimes n}(x_{[K]\setminus \mathcal{K}}^n) \times 2^{-n(I(X_{\mathcal{K}};Y\mid X_{[K]\setminus \mathcal{K}})-\delta_{[K]\setminus \mathcal{K}}(\epsilon))}\notag \\
	&~~~~~~~~~~~~~~~~~~~~~~~~~~~~~~ \times  \prod_{j\in \mathcal{K}} (M_j-1)  \mP_{X_j}^{\otimes n}(x_j^n)
	 \label{eq:bound_Ztilde_K}
\end{align}

From now on, restrict $\epsilon >0$ to be sufficiently small so that
\begin{align}
	&I(X_{\mathcal{K}};Y\mid X_{[K]\setminus \mathcal{K}})-\delta_{[K]\setminus \mathcal{K}}(\epsilon) > 0, ~~~~\forall \emptyset \neq \mathcal{K} \subseteq [K].
\end{align}
For any $(R_1,R_2,\ldots, R_K) \in \mathbb{R}_{\geq 0}^K$ such that
\begin{align} \label{eq:rate_tuple_K}
	\sum_{k \in \mathcal{K}} R_k < I(X_{\mathcal{K}}; Y\mid X_{[K]\setminus \mathcal{K}}) - \delta_{[K]\setminus \mathcal{K}}(\epsilon), ~~~~ \forall \emptyset \neq \mathcal{K} \subseteq [K],
\end{align}
by picking
\begin{align}
	M_k^{(n)} \triangleq \lfloor 2^{nR_k} \rfloor, ~~~~ \forall k\in  [K],
\end{align}
it is guaranteed that, as $n\to \infty$,
\begin{align}
	\frac{\log_2\big( M_k^{(n)} \big)}{n} \to R_k, ~~~~ \forall k\in [K] 
\end{align}
and that, 
\begin{align}
	&2^{-n(I(X_{\mathcal{K}};Y\mid X_{[K]\setminus \mathcal{K}})-\delta_{[K]\setminus \mathcal{K}}(\epsilon))} \prod_{j\in \mathcal{K}} (M_j-1) \notag \\
	&\leq 2^{-n(I(X_{\mathcal{K}};Y\mid X_{[K]\setminus \mathcal{K}})-\delta_{[K]\setminus \mathcal{K}}(\epsilon)-\sum_{k \in \mathcal{K}} R_k)} \\
	&\leq 2^{-n\gamma_{\mathcal{K}}}
\end{align}
for some $\gamma_{\mathcal{K}}>0$.
It follows from \eqref{eq:bound_Ztilde_K} that,
\begin{align}
	&\widetilde{\mZ}(x_1^n,\ldots,x_K^n\mid w_1,\ldots,w_K,y^n) \notag \\
	&\leq a_{[K]}(x_1^n,\ldots, x_K^n,y^n) \notag \\
	&~~~~+ \sum_{\emptyset \neq \mathcal{K} \subseteq [K]} \mP_{X_{[K]\setminus \mathcal{K}}}^{\otimes n}(x_{[K]\setminus \mathcal{K}}^n) \prod_{j\in \mathcal{K}} \mP_{X_j}^{\otimes n}(x_j^n)  \times 2^{-n\gamma_{\mathcal{K}}} 
	\label{eq:bound_Ztilde_2_K}
\end{align}

In order to satisfy \eqref{eq:cond_ng_K}, let
\begin{align}
	&\mQ(x_1^n,x_2^n,\ldots, x_K^n) \notag \\
	&\triangleq \Big(1-\frac{1}{n}\Big) \mP_{X_1X_2\cdots X_K}^{\otimes n}(x_1^n,x_2^n,\ldots, x_K^n) \\
	&~~~~+\frac{1/(2^{K}-1)}{n}\sum_{\emptyset \neq \mathcal{K} \subseteq [K]}  \mP_{X_{[K]\setminus \mathcal{K}}}^{\otimes n}(x_{[K]\setminus \mathcal{K}}^n) \prod_{j\in \mathcal{K}} \mP_{X_j}^{\otimes n}(x_j^n) 
\end{align}
which is a valid pmf in $\mathcal{P}(\mathcal{X}_1^n\times \mathcal{X}_2^n \times \cdots \times \mathcal{X}_K^n)$ as it is a convex combination of $2^K$ valid pmfs.
Due to \eqref{eq:def_a12K},
\begin{align}
	a_{[K]}(x_1^n,\ldots, x_K^n,y^n) \leq \Big(1-\frac{1}{n}\Big) \mP_{X_1X_2\cdots X_K}^{\otimes n}(x_1^n,x_2^n,\ldots, x_K^n).
\end{align}
Therefore, for sufficiently large $n$ (for which $\frac{1/(2^K-1)}{n} \geq 2^{-n\gamma_{\mathcal{K}}}, \forall \emptyset \neq \mathcal{K} \subseteq [K]$), we have $\widetilde{\mZ}(x_1^n,\ldots,x_K^n\mid w_1,\ldots,w_K,y^n)  \leq \mQ(x_1^n,\ldots,x_K^n)$, which meets the desired condition \eqref{eq:cond_ng_K} for non-negativity. Therefore, for every sufficiently small $\epsilon$, the above construction yields valid NS-assisted coding schemes $\mZ^{(n)}$ for all sufficiently large $n$.
\subsection{Reliability analysis}
Let us calculate the probability of all $K$ messages being decoded correctly. 
For sufficiently large $n$,
\begin{align}
	&\Pr(\widehat{W}_k=W_k, \forall k\in [K]) \notag \\
	&\stackrel{\eqref{eq:joint_dist}}{=} \frac{1}{M_1M_2\cdots M_K}\sum_{\substack{w_1,\ldots, w_K\\ x_1^n,\ldots,x_K^n,y^n}} \mZ(x_1^n,\ldots,x_K^n,(w_1,\ldots,w_K)\mid w_1,\ldots,w_K, y^n) \notag \\
	&\hspace{6cm} \times \mN^{\otimes n}_{Y\mid X_1\cdots X_K}(y^n\mid x_1^n,\ldots,x_K^n) \\
	&\stackrel{\eqref{eq:def_Z_K}}{\geq} \frac{1}{M_1M_2\cdots M_K} \sum_{\substack{w_1,\ldots,w_K\\ x_1^n,\ldots,x_K^n,y^n}} \widetilde{\mZ}(x_1^n,\ldots,x_K^n,(w_1,\ldots,w_K)\mid w_1,\ldots,w_K, y^n) \notag \\
	&\hspace{6cm} \times \mN^{\otimes n}_{Y\mid X_1\cdots X_K}(y^n\mid x_1^n,\ldots, x_K^n)\\
	& \stackrel{\eqref{eq:def_Ztilde_K}}{=} \sum_{x_1^n,\ldots,x_K^n,y^n}a_{[K]}(x_1^n,\ldots,x_K^n,y^n) \times \mN^{\otimes n}_{Y\mid X_1\cdots X_K}(y^n\mid x_1^n,\ldots,x_K^n) \\
	& \stackrel{\eqref{eq:def_a12K}}{=} \Big(1-\frac{1}{n}\Big) \Pr\big( (X_1^n,\ldots,X_K^n,Y^n) \in \mathcal{T}_{\epsilon}^{(n)} \big)
\end{align}
where $(X_1^n,\ldots,X_K^n,Y^n)\sim \mP_{X_1\cdots X_K}^{\otimes n}\mN_{Y\mid X_1\cdots X_K}^{\otimes n}$. The law of large numbers implies that $$\Pr\big( (X_1^n,\ldots,X_K^n,Y^n) \in \mathcal{T}_{\epsilon}^{(n)} \big) \to 1$$ as $n\to \infty$. Therefore, the probability of decoding all $K$ messages correctly $\to 1$ as $n\to \infty$. It follows that $\lim_{n \to \infty} P_{e,k}(\mZ^{(n)}) = 0, \forall k\in [K]$. The rate tuples $(R_1,R_2,\ldots, R_K)$ thus achieved are exactly described by \eqref{eq:rate_tuple_K}. Taking $\epsilon \to 0$ concludes the proof of achievability. \hfil \qed

\section{Proof of Theorem \ref{thm:NSMAC}: Converse} \label{proof:converse}
Let $(R_1,R_2,\ldots, R_K)$ be any achievable rate tuple, and let $(\mZ^{(n)})_{n\in \mathbb{N}}$ be a sequence of $$(M_1^{(n)},M_2^{(n)}, \ldots, M_K^{(n)}, n)$$ NS-assisted coding schemes for which $\liminf_{n\to \infty}\log_2(M_k^{(n)})/n = R_k$ and $\lim_{n\to \infty} P_{e,k}(\mZ^{(n)}) = 0, \forall k\in [K]$. Let $\epsilon_n$ be the probability that at least one message is incorrectly decoded  associated with $\mZ^{(n)}$. It follows that $\epsilon_n\to 0$ as $n\to \infty$.

For each non-empty subset $\mathcal{K}\subseteq [K]$, consider the transmitters indexed by $k\in \mathcal{K}$ as one big transmitter (say Transmitter A) and the transmitters indexed by $j\in [K]\setminus \mathcal{K}$ as another big transmitter (say Transmitter B, could be empty). We apply \cite[Lemma 38]{fawzi2024MAC} by treating Transmitter A as the first transmitter and Transmitter B as the second transmitter. It follows that
\begin{align}
	\log_2(\prod_{k \in \mathcal{K}} M_k) \leq \frac{I(X_{\mathcal{K}}^n;Y^n\mid X_{[K]\setminus \mathcal{K}}^n)+H_b(\epsilon_n)}{1-\epsilon_n}
\end{align}
where $H_b(\cdot)$ denotes the binary entropy function.
Since $\log_2(\prod_{k \in \mathcal{K}} M_k) = \sum_{k\in \mathcal{K}} \log_2(M_k) \geq n\sum_{k\in \mathcal{K}}R_k - o(n)$, we have
\begin{align}
	&n\sum_{k\in \mathcal{K}}R_k-o(n) \notag \\
	&\leq I(X_{\mathcal{K}}^n;Y^n\mid X_{[K]\setminus \mathcal{K}}^n) \\
	&\leq \sum_{i=1}^n I(X_{\mathcal{K},i};Y_i\mid X_{[K]\setminus \mathcal{K},i})\label{eq:use_ff_1}\\
	&=n I(X_{\mathcal{K},Q};Y_Q\mid X_{[K]\setminus \mathcal{K},Q} ,Q) \label{eq:time_sharing}
\end{align}
Step \eqref{eq:use_ff_1} applies \cite[Lemma 39]{fawzi2024MAC}. In Step \eqref{eq:time_sharing}, $Q$ is a time-sharing variable, uniformly distributed over $[n]$ and independent of $(X_{[K]}^n,Y^n)$. Taking $n\to \infty$, it shows that for every non-empty $\mathcal{K}\subseteq [K]$, we have $\sum_{k\in \mathcal{K}}R_k \leq I(X_{\mathcal{K}};Y\mid X_{[K]\setminus \mathcal{K}}, Q)$ for $(Q,X_{[K]}, Y)$ distributed according to some $\mP_{QX_1\cdots X_K Y} = \mP_{QX_1\cdots X_K} \mN_{Y\mid X_1\cdots X_K}$. 
Taking $n\to \infty$ establishes the converse without a cardinality restriction on the time-sharing variable $Q$. The cardinality bound $|\mathcal{Q}| \leq K$ then follows from the Fenchel-Eggleston-Carath\'{e}odory theorem (see, e.g., \cite[Appendix 4A]{NIT}). This completes the proof of converse. \hfil \qed

\section{Proof of Theorem \ref{thm:sum_prod_lock}: $C_{\Sigma}^{\NS} \geq K\log_2(q-1)$} \label{proof:sum_prod_lock_NSa}
Let us provide two proofs. 
The first one applies Theorem \ref{thm:NSMAC} as follows.

Let $X_k = (A_k,B_k), \forall k\in [K]$. Let $(A_1,A_2,\ldots, A_{K-1},A_K, B_1,B_2,\ldots, B_{K-1})$ be uniformly distributed over $(\mathbb{F}_q^{\times})^K \times \mathbb{F}_q^{K-1}$. Let $B_K = \prod_{i=1}^K A_i -\sum_{j=1}^{K-1} B_j$, which is uniformly distributed over $\mathbb{F}_q$. It follows that for any $k\in [K]$, the $2K-1$ random variables $(A_1,\ldots, A_K, B_1,\ldots, B_{k-1}, B_{k+1}, \ldots, B_K)$ are independent. Note that with these inputs, the channel never erases, so $Y = (A_1,A_2,\ldots, A_K)$. Therefore, for any non-empty subset $\mathcal{K}\subseteq [K]$, we have $I(X_{\mathcal{K}};Y\mid X_{[K]\setminus \mathcal{K}}) \geq H(A_{\mathcal{K}}\mid A_{[K]\setminus \mathcal{K}},B_{[K]\setminus \mathcal{K}}) = H(A_{\mathcal{K}}) = |\mathcal{K}|\log_2 (q-1)$. The rate tuple $(R_1, \ldots, R_K) = (\log_2 (q-1), \ldots, \log_2 (q-1)) $ is thus in the region $\mathcal{C}^{\NS}$. It follows that $C_{\Sigma}^{\NS}\geq K\log_2(q-1)$. \hfil \qed

The second proof, presented next, needs only transmitters' side NS assistance. 

Consider a $K$-partite NS box $\mB$ shared by the $K$ transmitters. The box is modeled as a conditional pmf $\mB(b_1,\ldots, b_K\mid a_1,\ldots, a_K)\in \mathcal{P}(\mathbb{F}_q^K\mid (\mathbb{F}_q^{\times})^K)$, with inputs denoted by $(A_1,\ldots, A_K)$ and outputs denoted by $(B_1,\cdots, B_K)$, respectively, for the $K$ transmitters. The conditional pmf is defined as,
\begin{align}
	&\mB(b_1,\ldots, b_K\mid a_1,\ldots, a_K)\notag \\
	&=\begin{cases}
		q^{-(K-1)}, & \mbox{if}~ \sum_{j=1}^K b_j = \prod_{i=1}^K a_i\\
		0, & \mbox{otherwise}
	\end{cases}.
\end{align}
Note that, for any $k\in [K]$, $\sum_{b_k\in \mathbb{F}_q}\mB(b_1,\ldots, b_K\mid a_1,\ldots, a_K) = q^{-(K-1)}$, which is invariant under changes of $a_k\in \mathbb{F}_q^{\times}$ (and in fact all $(a_1,\ldots, a_K)$). Thus $\mB$ is non-signaling. Now suppose for each $k\in [K]$, the message $W_k$ is uniformly distributed over $\mathbb{F}_q^{\times}$. Transmitter $k$ inputs  $A_k=W_k$ to the box $\mB$, and the box produces $B_k$. Transmitter $k$ then sends $X_k=(A_k,B_k)$ through the channel. Due to the construction of the box, the $(A_1,\ldots, A_K)$ and $(B_1,\ldots, B_K)$ thus generated always satisfy the condition in \eqref{eq:def_channel_sum_prod_lock} under which the channel produces $(A_1,\ldots, A_K)$ to the receiver. Therefore, the receiver is able to receive all messages $(W_1,\ldots, W_K)$ correctly. The rate tuple $(R_1,\ldots, R_K) = (\log_2(q-1), \ldots, \log_2(q-1))$ is thus achieved without error. It follows that $C_{\Sigma}^{\NS}\geq K\log_2(q-1)$. \hfil \qed

\section{Proof of Theorem \ref{thm:sum_prod_lock}: $C_{\Sigma} \leq \log_2 q+ o_q(\log_2 q)$} \label{proof:sum_prod_lock_Cc}
Let $(X_1,X_2,\ldots, X_K)\sim \mP_{X_1}\mP_{X_2}\cdots \mP_{X_K}$ and $Y$ be the output of the channel. We will prove that $H(Y) \leq \log_2 q+o_q(\log_2 q)$. 

Denote the set of inputs for which the channel does not produce $\phi$, i.e.,
\begin{align}
	\mathcal{E} \triangleq \Bigg\{(x_1,\ldots, x_K)\in \prod_{k=1}^K  \mathcal{X}_k \colon \sum_{j=1}^K b_j = \prod_{i=1}^K a_i \Bigg\},
\end{align}
where $x_i = (a_i,b_i)$. We view $\mathcal{E}$ as (the edge set of) a $K$-partite, $K$-uniform hypergraph\footnote{A $K$-partite, $K$-uniform hypergraph is a hypergraph whose vertex set is partitioned into $K$ disjoint classes, with every hyperedge containing exactly one vertex from each class.} whose $i^{th}$ vertex class is $\mathcal{X}_i$.
In the following, we typically identify a graph (and hypergraph) with its edge set.
The definition of $\mathcal{E}$ ensures that it satisfies the following conditional $\mathcal{K}_{2,2}$-freeness property.
\begin{definition}[Conditional $\mathcal{K}_{2,2}$-freeness] \label{def:cond_K22free}
	Let $\kappa\geq 2$ and $\mathcal{H}\subseteq \mathcal{V}_1\times \cdots \times \mathcal{V}_{\kappa}$ be a $\kappa$-partite, $\kappa$-uniform hypergraph whose $i^{th}$ vertex class is $\mathcal{V}_i$. We say that $\mathcal{H}$ is conditionally $\mathcal{K}_{2,2}$-free if, for every pair of distinct indices $i<j \in [\kappa]$ and every fixed choice of vertices $v_k\in \mathcal{V}_k$ for all $k\in[\kappa]\setminus \{i,j\}$, the resulting bipartite graph
	\begin{align}
		\mathcal{G}_{ij} \triangleq \{(a,b) \in \mathcal{V}_i \times \mathcal{V}_j \colon (v_1, \ldots, v_{i-1}, a, v_{i+1}, \ldots, v_{j-1}, b, v_{j+1}, \ldots, v_{\kappa}) \in \mathcal{H} \}
	\end{align}
	does not contain $\mathcal{K}_{2,2}$ (the complete bipartite graph on two vertices in each partition class).
\end{definition}
The following lemma follows directly from the definition of $\mathcal{E}$.
\begin{lemma} \label{lem:K22free}
	For $K\geq 2$, $\mathcal{E}$ is conditionally $\mathcal{K}_{2,2}$-free.
\end{lemma}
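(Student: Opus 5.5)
We argue directly from the definition of $\mathcal{E}$ and Definition \ref{def:cond_K22free}, with $\kappa=K$ and $\mathcal{V}_k=\mathcal{X}_k=\mathbb{F}_q^{\times}\times\mathbb{F}_q$. Fix distinct indices $i<j$ in $[K]$ and fix $x_k=(a_k,b_k)$ for every $k\notin\{i,j\}$. Set
\begin{align}
\alpha \triangleq \prod_{k\notin\{i,j\}} a_k \in \mathbb{F}_q^{\times}, \qquad \beta \triangleq \sum_{k\notin\{i,j\}} b_k\in\mathbb{F}_q,
\end{align}
with the empty product equal to $1$ and the empty sum equal to $0$ when $K=2$. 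Then $((a,b),(a',b'))\in\mathcal{X}_i\times\mathcal{X}_j$ is an edge of $\mathcal{G}_{ij}$ if and only if $b+b'+\beta=\alpha a a'$.

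Suppose for contradiction that $\mathcal{G}_{ij}$ contains a $\mathcal{K}_{2,2}$. Then there are distinct vertices $u_1=(a_1',b_1'),u_2=(a_2',b_2')$ in $\mathcal{X}_i$ and distinct vertices $v_1=(c_1,d_1),v_2=(c_2,d_2)$ in $\mathcal{X}_j$ such that all four pairs $(u_s,v_t)$ are edges. This gives the four equations $b_s'+d_t+\beta=\alpha a_s'c_t$ for $s,t\in\{1,2\}$. Adding the $(1,1)$ and $(2,2)$ equations and subtracting the $(1,2)$ and $(2,1)$ equations, the $b$'s, $d$'s and $\beta$ cancel, leaving
\begin{align}
0=\alpha\,(a_1'-a_2')(c_1-c_2).
\end{align}
Since $\alpha\neq 0$ and $\mathbb{F}_q$ is a field, $a_1'=a_2'$ or $c_1=c_2$.

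Consider first the case $a_1'=a_2'$. Subtracting the $(1,1)$ equation from the $(2,1)$ equation gives $b_1'-b_2'=\alpha c_1(a_1'-a_2')=0$, so $b_1'=b_2'$ and hence $u_1=u_2$, contradicting distinctness. The case $c_1=c_2$ is symmetric and forces $v_1=v_2$, again a contradiction. Therefore $\mathcal{G}_{ij}$ contains no $\mathcal{K}_{2,2}$.

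Since $i,j$ and the fixed vertices were arbitrary, $\mathcal{E}$ is conditionally $\mathcal{K}_{2,2}$-free. The only point needing care is that $\alpha$ is nonzero, which holds because the $a_k$ range over $\mathbb{F}_q^{\times}$. The $b$-coordinates enter the edge condition additively, so they cancel under this alternating-sum step.
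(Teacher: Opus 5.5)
Your proof is correct, and it is the same idea as the paper's, carried out algebraically rather than geometrically. You correctly identify the one point that needs care: $\alpha=\prod_{k\notin\{i,j\}}a_k\neq 0$ is what lets you pass from $\alpha(a_1'-a_2')(c_1-c_2)=0$ to $a_1'=a_2'$ or $c_1=c_2$.

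The paper takes $(i,j)=(1,K)$ without loss of generality. With the middle coordinates fixed, it sends each $x_1$ to the point $(\prod_{i=1}^{K-1}a_i,\sum_{j=1}^{K-1}b_j)\in\mathbb{F}_q^2$ and each $x_K$ to the line $y=a_Kx-b_K$. Membership in $\mathcal{E}$ then becomes point--line incidence, and a $\mathcal{K}_{2,2}$ would put two distinct points on two distinct lines, which is impossible. In your coordinates the point is $(\alpha a_s',\,b_s'+\beta)$ and the line is $y=c_tx-d_t$. Your alternating-sum identity plus the two cases is exactly the algebraic content of ``two distinct lines meet in at most one point'':
\begin{itemize}
\item $c_1=c_2$ is the parallel-lines case;
\item $a_1'=a_2'$ is two points with the same abscissa, which no non-vertical line can join.
\end{itemize}

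The paper's framing is more conceptual: each slice $\mathcal{G}_{ij}$ is a subgraph of the point--line incidence graph of the affine plane over $\mathbb{F}_q$, the standard extremal $\mathcal{K}_{2,2}$-free graph. That explains why the K\H{o}v\'ari--S\'os--Tur\'an bound is the natural next tool.

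Your version is more self-contained and symmetric in $i,j$, so no WLOG is needed. It also makes the role of $\alpha\neq 0$ explicit. The paper uses this only implicitly, when it asserts without comment that its two points $v,v'$ are distinct. That distinctness is injectivity of $x_1\mapsto v$, which would fail if the fixed product could vanish.
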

\begin{proof}
	Assume to the contrary that, without loss of generality, the following four hyperedges
	\begin{align*}
		(x_1,x_2,\ldots, x_{K-1}, x_K) \\
		(x_1',x_2,\ldots, x_{K-1}, x_K)\\
		(x_1,x_2,\ldots, x_{K-1}, x_K')\\
		(x_1',x_2,\ldots, x_{K-1}, x_K')
	\end{align*} 
	are in $\mathcal{E}$. Recall the notation $x_i=(a_i,b_i) \in \mathbb{F}_q^{\times}\times \mathbb{F}_q$. Consider that $v,v'$ are points on the finite plane $\mathbb{F}_q^2$ where $v \triangleq (\prod_{i=1}^{K-1}a_i,\sum_{j=1}^{K-1}b_j)$, $v' \triangleq (a_1'\prod_{i=2}^{K-1}a_i,b_1'\sum_{j=2}^{K-1}b_j)$, and that $\ell(x_K), \ell(x_K')$ define lines on the finite plane, where $\ell(x_K)\triangleq \{(x,y)\in \mathbb{F}_q^2\colon y=a_Kx-b_K\}$, $\ell(x_K')\triangleq \{(x,y)\in \mathbb{F}_q^2\colon y=a_K'x-b_K'\}$. Then $v$ and $v'$ are distinct points, and $\ell(x_K)$ and $\ell(x_K')$ are distinct lines, on the finite plane $\mathbb{F}_q^2$. However, the initial assumption implies that both $v$ and $v'$ are incident with both $\ell(x_K)$ and $\ell(x_K')$. This contradicts the fact that two distinct lines can intersect in at most one point. 
\end{proof}
The following lemma is a consequence of the conditional $\mathcal{K}_{2,2}$-free property.
\begin{lemma} \label{lem:K22pK}
	Let $\mathcal{H}\subseteq \mathcal{V}_1\times \cdots \times \mathcal{V}_{\kappa}$ be a $\kappa$-partite, $\kappa$-uniform hypergraph whose $i^{th}$ vertex class is $\mathcal{V}_i$. Let $|\mathcal{V}_i| = n_i$, $|\mathcal{H}| = e$, and assume $e>0$. Define the edge density $\rho \triangleq \frac{e}{n_1n_2\cdots n_{\kappa}}$. Then, there exists an index $k^*\in [\kappa]$, such that
	\begin{align}
		n_k \leq \frac{4}{\rho^2}, ~~~~ \forall k\in [\kappa] \setminus \{k^*\}.
	\end{align}
	In other words, at most one of $n_1,n_2,\ldots, n_{\kappa}$ can be bigger than $\frac{4}{\rho^2}$.
\end{lemma}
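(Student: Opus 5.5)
The plan is to argue by contradiction. I take the hypothesis of the lemma to include that $\mathcal{H}$ is conditionally $\mathcal{K}_{2,2}$-free in the sense of Definition \ref{def:cond_K22free}; this is the property the lemma is said to be a consequence of, and it is what Lemma \ref{lem:K22free} supplies for $\mathcal{E}$. Suppose two distinct indices $i<j$ both satisfy $n_i>4/\rho^2$ and $n_j>4/\rho^2$. I will slice $\mathcal{H}$ along all coordinates other than $i,j$, bound the density of each slice by a Kővári--Sós--Turán type estimate, and average the slices to contradict the global density $\rho$.

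\emph{Step 1 (slicing).} For every fixed $(v_k)_{k\in[\kappa]\setminus\{i,j\}}$ there is a bipartite graph $\mathcal{G}_{ij}\subseteq\mathcal{V}_i\times\mathcal{V}_j$, and by assumption it contains no $\mathcal{K}_{2,2}$. The hyperedges of $\mathcal{H}$ are partitioned among the $N\triangleq\prod_{k\neq i,j}n_k$ slices, so $e=\sum_{\text{slices}}|\mathcal{G}_{ij}|$. Hence the average over slices of the slice density $|\mathcal{G}_{ij}|/(n_in_j)$ equals $\rho$, and some slice has density at least $\rho$. It therefore suffices to bound the density of every $\mathcal{K}_{2,2}$-free bipartite graph on $n_i\times n_j$ vertices by something smaller than $\rho$.

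\emph{Step 2 (the $\mathcal{K}_{2,2}$-free bound).} This is the only real computation. Let $G$ have $e_G$ edges and let $d(b)$ be the degree of $b\in\mathcal{V}_j$. Since $G$ has no $\mathcal{K}_{2,2}$, every ordered pair of distinct vertices of $\mathcal{V}_i$ has at most one common neighbor. Counting such pairs through their common neighbor gives
\[
\sum_b d(b)(d(b)-1)\le n_i(n_i-1).
\]
Applying Cauchy--Schwarz (or Jensen) with $\sum_b d(b)=e_G$ gives $e_G^2/n_j-e_G\le n_i^2$. Solving this quadratic inequality yields $e_G\le n_j+n_i\sqrt{n_j}$, that is,
\[
\frac{e_G}{n_in_j}\le\frac{1}{n_i}+\frac{1}{\sqrt{n_j}}.
\]

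\emph{Step 3 (contradiction).} From $n_j>4/\rho^2$ we get $1/\sqrt{n_j}<\rho/2$. From $n_i>4/\rho^2$ together with $\rho\le1$ we get $1/n_i<\rho^2/4\le\rho/4$. Every slice therefore has density less than $3\rho/4$, so the average density is also less than $3\rho/4<\rho$, since $\rho>0$ because $e>0$. This contradicts Step 1. Hence at most one index can have $n_k>4/\rho^2$, and $k^*$ can be taken to be that index, or any index if there is none.

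I expect Step 2 to be the main obstacle, but only mildly so: the care is in solving the quadratic cleanly (using $\sqrt{a+b}\le\sqrt a+\sqrt b$) and in getting the roles of $n_i$ and $n_j$ right. The asymmetric form of the bound is harmless, because both $n_i$ and $n_j$ are assumed large.
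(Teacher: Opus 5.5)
Your proposal is correct and follows the paper's proof: you slice along the coordinates other than $i,j$, apply the K\H{o}v\'ari--S\'os--Tur\'an bound $e_c\le n_j+n_i\sqrt{n_j}$ to each slice, average to get $\rho\le 1/n_i+1/\sqrt{n_j}$, and derive a contradiction. You are also right that conditional $\mathcal{K}_{2,2}$-freeness must be read into the hypothesis, since the paper's proof uses it. The differences are cosmetic: you inline the bound using ordered pairs and Cauchy--Schwarz rather than the appendix's unordered pairs and convexity, and you close with $\rho\le 1$ rather than the paper's dichotomy ``$n_i\le 2/\rho$ or $n_j\le 4/\rho^2$''.
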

To prove Lemma \ref{lem:K22pK}, we need the following intermediate lemma.
\begin{lemma}[KST] \label{lem:K22p2}
	Let $\mathcal{G}\subseteq \mathcal{U} \times \mathcal{V}$ be a bipartite graph that does not contain $\mathcal{K}_{2,2}$. Let $|\mathcal{U}|=m$, $|\mathcal{V}| = n$, and $|\mathcal{G}| = e$. Then
	\begin{align}
		e \leq n+m\sqrt{n}.
	\end{align}
	By symmetry, we also have $e\leq m+n\sqrt{m}$.
\end{lemma}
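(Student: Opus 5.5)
We will prove Lemma \ref{lem:K22p2} by the standard Kővári--Sós--Turán double-counting argument applied to ``cherries'': pairs of edges sharing a common endpoint in $\mathcal{V}$ whose other endpoints are two distinct vertices of $\mathcal{U}$. For each $v\in\mathcal{V}$ let $d_v$ denote its degree in $\mathcal{G}$, so that $\sum_{v\in\mathcal{V}} d_v = e$. The plan is to count the number $T$ of triples $(v,\{u,u'\})$ with $u\neq u'\in\mathcal{U}$ both adjacent to $v$ in two ways.

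First, counting by $v$ gives $T=\sum_{v}\binom{d_v}{2}$. Second, since $\mathcal{G}$ contains no $\mathcal{K}_{2,2}$, any unordered pair $\{u,u'\}$ of distinct vertices of $\mathcal{U}$ has at most one common neighbor in $\mathcal{V}$. Hence $T\leq \binom{m}{2}\leq m^2/2$. Combining these gives $\sum_v d_v(d_v-1)\leq m^2$.

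Next we apply convexity (Cauchy--Schwarz) to obtain $\sum_v d_v^2\geq e^2/n$. This yields $e^2/n - e \leq m^2$, that is, $e^2 - ne - nm^2\leq 0$, a quadratic inequality in $e$. Solving it gives
\[
e\leq \frac{n+\sqrt{n^2+4nm^2}}{2}\leq \frac{n + n + 2m\sqrt{n}}{2} = n+m\sqrt{n},
\]
where the last inequality uses $\sqrt{a+b}\leq\sqrt a+\sqrt b$.

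The symmetric bound $e\leq m+n\sqrt m$ then follows by exchanging the roles of $\mathcal{U}$ and $\mathcal{V}$, since $\mathcal{K}_{2,2}$-freeness is symmetric in the two sides. No step is a serious obstacle. The only point requiring care is orienting the counting correctly: we count pairs in $\mathcal{U}$ against common neighbors in $\mathcal{V}$, so that $n$ appears as the additive term and $m\sqrt n$ as the main term. We must also handle the degenerate cases $d_v\in\{0,1\}$, which is automatic since $\binom{d_v}{2}=0$ there.
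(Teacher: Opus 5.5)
Your proposal is correct and follows essentially the same route as the paper: both count pairs $\{u_1,u_2\}\subseteq\mathcal{U}$ against their at most one common neighbor in $\mathcal{V}$ to get $\sum_v \binom{d(v)}{2}\le\binom{m}{2}$, then apply convexity (you via Cauchy--Schwarz, the paper via Jensen on $\binom{x}{2}$) to reach $e^2-ne-nm^2\le 0$, and bound the larger root of that quadratic by $n+m\sqrt{n}$. The only difference is cosmetic, and the symmetric bound follows by swapping roles exactly as you say.
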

\begin{proof}
	It is a special case of the K\H{o}v\'ari--S\'os--Tur\'an Theorem \cite{kovari1954problem} (see also the introduction of \cite{K22free}). For completeness, we provide a proof in Appendix \ref{proof:lem_K22p2}.
\end{proof}
Let us prove Lemma \ref{lem:K22pK}. 
\begin{proof}[Proof of Lemma \ref{lem:K22pK}]
	Fix distinct indices $i,j\in [\kappa]$. For every fixed choice $c=(v_k \in \mathcal{V}_k)_{k\in [\kappa]\setminus\{i,j\}}$ of vertices, let $e_c$ denote the number of hyperedges in $\mathcal{H}$ whose coordinates outside $i,j$ are $c$. Since $\mathcal{H}$ is conditionally $\mathcal{K}_{2,2}$-free, Lemma \ref{lem:K22p2} implies that $e_c \leq n_j + n_i \sqrt{n_j}$. Summing over all $\prod_{k\in [\kappa]\setminus \{i,j\}}n_k$ choices of $c$ gives
	\begin{align}
		e\leq \Big(\prod_{k \in [\kappa]\setminus \{i,j\}}n_k \Big) (n_j+n_i\sqrt{n_j}).
	\end{align}
	Dividing by $n_1n_2\cdots n_{\kappa}$ yields
	\begin{align}
		\rho \leq \frac{1}{n_i} + \frac{1}{\sqrt{n_j}}.
	\end{align}
	If both $n_i > 2/\rho$ and $n_j> 4/\rho^2$, then we would have $\rho < \rho/2+\rho/2 =\rho$, which is a contradiction. Therefore, we must have $n_i \leq 2/\rho$ or $n_j \leq 4/\rho^2$. Since $2/\rho \leq 4/\rho^2$, we have that $\min\{n_i, n_j\} \leq 4/\rho^2$. Since the argument works for every distinct pair of $\{i,j\}$, it follows that at most one of $n_1,n_2,\ldots, n_{\kappa}$ can be bigger than $4/\rho^2$. We thus conclude the proof.
\end{proof}

With Lemma \ref{lem:K22pK} we will proceed to prove the entropic bound $H(Y)\leq \log_2 q +o_q(\log_2 q)$. Recall that $(X_1,\ldots, X_K) \sim \mP_{X_1}\cdots \mP_{X_K}$. We first need to partition each set $\mathcal{X}_k$ into subsets according to the probability $\mP_{X_k}(x_k)$ for $x_k\in \mathcal{X}_k$.  
Specifically, for each $k\in [K]$, let
\begin{align}
	\mathcal{S}_{k,*} \triangleq \{x_k\in \mathcal{X}_k\colon \mP_{X_k}(x_k) \leq q^{-3}\}.
\end{align}
Let 
\begin{align}
	\eta \triangleq   \frac{1}{\sqrt{\log_2 q}}.
\end{align}
The motivation for this choice will become clear later. For each integer $0 \leq \ell < 3/\eta$, let
\begin{align}
	\mathcal{S}_{k,\ell} \triangleq \big\{x_k\in \mathcal{X}_k \colon \max\{ q^{-\eta (\ell+1)}, q^{-3} \} < \mP_{X_k}(x_k) \leq q^{-\eta \ell} \big\}.
\end{align}
In other words, all $x_k$ whose probability masses do not exceed $q^{-3}$ are assigned to the level labeled by $*$, while the remaining $x_k$ are grouped into levels according to multiplicative intervals with ratio $q^{-\eta} = 2^{-\sqrt{\log_2 q}}$. 

For each $k\in [K]$, let $S_k$ denote the level index of the random variable $X_k$, i.e., $S_k = \ell$ if $X_k \in \mathcal{S}_{k,\ell}$.
Note that $S_k$ has at most $L = \big\lceil \frac{3}{\eta} \big\rceil + 1$ possible values. Therefore,
\begin{align} \label{eq:entropy_for_classes}
	H(S_1,S_2,\ldots, S_K) \leq K\log(L) = o_q(\log_2 q).
\end{align}
Since for all $k\in [K]$, we have $|\mathcal{X}_k| = q(q-1) < q^2$, and thus $\Pr(S_k = *)\leq q^2 q^{-3} = q^{-1}$. 
By the union bound, 
\begin{align}
	\Pr(\mbox{at least one}~ S_k = *) \leq \frac{K}{q}.
\end{align}
Consider the conditional entropy
\begin{align}
	&H(Y\mid S_1,\ldots, S_K)\notag \\
	&=\sum_{(s_1,\ldots, s_K)} \Pr(S_1=s_1,\ldots, S_K=s_K) \times \notag \\
	&\hspace{2cm} H(Y\mid S_1=s_1,\ldots, S_K=s_K)
\end{align}
Since $|\mathcal{Y}|=(q-1)^K+1$, the total contribution to the conditional entropy $H(Y\mid S_1,S_2,\ldots,S_K)$ from those $(s_1,\ldots,s_K)$ for which at least one $s_k=*$ is at most
\begin{align} \label{eq:negligible_classes}
	\frac{K}{q} \log_2((q-1)^{K}+1) = o_q(1).
\end{align}
Next we bound $H(Y\mid S_1=s_1,\ldots, S_K=s_K)$ for each fixed $(s_1,\ldots, s_K)$ such that $s_k\neq *, \forall k\in [K]$ and that $\Pr(S_k=s_k, \forall k\in [K])>0$.
Let
\begin{align}
	n_k \triangleq |\mathcal{S}_{k,s_k}| > 0.
\end{align}
For every $x_k\in \mathcal{S}_{k,s_k}$, 
\begin{align}
	&\Pr(X_k=x_k\mid S_k=s_k) \\
	&=\frac{\Pr(X_k=x_k)}{\Pr(S_k=s_k)}\\
	&\leq \frac{q^{-\eta s_k}}{n_k q^{-\eta(s_k+1)}}\\
	&= \frac{q^{\eta}}{n_k}
\end{align}
Since $X_1,X_2,\ldots, X_K$ are independent and $S_k$ is a function of $X_k$, we have that for every $(x_1,\ldots, x_K) \in \mathcal{S}_{1,s_1}\times \cdots \times \mathcal{S}_{K,s_K}$, 
\begin{align}
	&\Pr(X_1=x_1,\ldots, X_K=x_K\mid S_1=s_1,\ldots, S_K=s_K)\notag \\
	&= \prod_{k=1}^K \Pr(X_k=x_k\mid S_k=s_k) \\
	&\leq \frac{q^{K\eta}}{\prod_{k=1}^K n_k}
\end{align}
Let
\begin{align}
	e \triangleq |\mathcal{E} \cap (\mathcal{S}_{1,s_1} \times \cdots \times \mathcal{S}_{K,s_K})|,
\end{align}
and
\begin{align}
	\alpha \triangleq \Pr((X_1,\ldots, X_K) \in \mathcal{E} \mid S_1=s_1,\ldots, S_K=s_K).
\end{align}
Then
\begin{align} \label{eq:alpha_is_small}
	\alpha \leq \frac{e q^{K\eta}}{\prod_{k=1}^K n_k}.
\end{align}
Note that $\alpha$ is the conditional probability that the channel does not produce the erasure symbol $\phi$ conditioned on $S_1=s_1,\ldots, S_K=s_K$. 
Now consider two cases.
\begin{enumerate}
	\item If $\alpha \leq q^{-\eta}$, then $H(Y\mid S_1=s_1,\ldots, S_K=s_K) \leq H_b(\alpha) + \alpha \log_2((q-1)^K) \leq  H_b(\alpha) + K2^{-\sqrt{\log_2 q}}\log_2q  = o_q(\log_2 q)$.
	\item If $\alpha > q^{-\eta}$. Then \eqref{eq:alpha_is_small} implies $\frac{e q^{K\eta}}{\prod_{k=1}^K n_k} \geq q^{-\eta}$ and therefore $\frac{e}{\prod_{k=1}^K n_k} \geq q^{-\eta(K+1)}$. Applying Lemma \ref{lem:K22pK} to the sub-hypergraph of $\mathcal{E}$ with $\mathcal{V}_k = \mathcal{S}_{k,s_k}, \forall k\in [K]$, we have that there exists a $k^*\in [K]$, such that
	\begin{align}
		n_{k} \leq 4\Big( \frac{e}{\prod_{k=1}^K n_k} \Big)^{-2} \leq 4q^{2\eta (K+1)}, ~~ \forall k\in [K]\setminus \{k^*\}.
	\end{align}
	In other words, all $n_k$ except for $n_{k^*}$ satisfy the upper bound $4q^{2\eta (K+1)}$. The value of $k^*$ may be chosen as $1$ if all $n_k$ are upper bounded by $4q^{2\eta (K+1)}$. We then have
	\begin{align}
		&H(X_{[K]\setminus \{k^*\}}\mid S_1=s_1,\ldots, S_K=s_K)\notag \\
		&\leq \sum_{k\in [K]\setminus \{k^*\}} H(X_k\mid S_1=s_1,\ldots, S_K=s_K) \\
		&\leq \sum_{k\in [K]\setminus \{k^*\}}\log_2 n_k \\
		&\leq (K-1)(2\eta (K+1) \log_2 q+2)\\
		&=O_q(\sqrt{\log_2 q})\\
		&=o_q(\log_2 q).
	\end{align}
	Therefore,
	\begin{align}
		&H(Y\mid S_1=s_1,\ldots, S_K=s_K)\notag \\
		&\leq H(Y, X_{[K]\setminus \{k^*\}}\mid S_1=s_1,\ldots, S_K=s_K)\\
		&=H(X_{[K]\setminus \{k^*\}}\mid S_1=s_1,\ldots, S_K=s_K)\notag \\
		&~~~~+H(Y\mid X_{[K]\setminus \{k^*\}}, S_1=s_1,\ldots, S_K=s_K)\\
		&\leq o_q(\log_2 q) + \log_2 q
	\end{align}
	The last step follows because, once $X_{k}=(A_k,B_k)$ is fixed for every $k\in [K]\setminus \{k^*\}$, the only remaining freedom in $Y$ is $A_{k^*}\in \mathbb{F}_q^{\times}$, plus the erasure symbol.
\end{enumerate}
Therefore in both cases we obtain $H(Y\mid S_1=s_1,\ldots, S_K=s_K) \leq \log_2 q+ o_q(\log_2 q)$. Together with \eqref{eq:negligible_classes}, taking the average,
\begin{align}
	&H(Y\mid S_1,\ldots, S_K)\notag \\
	&\leq \log_2 q + o_q(\log_2 q).
\end{align}
Combining it with \eqref{eq:entropy_for_classes}, we have that
\begin{align}
	H(Y) \leq H(S_1,\ldots, S_K) + H(Y\mid S_1,\ldots, S_K) \leq \log_2 q + o_q(\log_2 q).
\end{align} 
Note that this holds for every product distribution $\mP_{X_1}\cdots \mP_{X_K}$. Therefore, we conclude that $C_{\Sigma} \leq \log_2 q + o_q(\log_2 q)$.
\hfil \qed

\section{Proof of Lemma \ref{lem:K22p2}} \label{proof:lem_K22p2}
For $v \in \mathcal{V}$, let $d(v) \triangleq |\{u \in \mathcal{U} \colon (u,v) \in \mathcal{G}\}|$ be its degree (into $\mathcal{U}$). Then
\begin{align}
	\sum_{v \in \mathcal{V}} d(v) = e. \label{eq:sum_dv_and_e}
\end{align}
For any element $v \in \mathcal{V}$, $\binom{d(v)}{2}$ counts the number of (unordered) pairs of  $\{u_1,u_2\} \subseteq \mathcal{U}$ that have $v$ as a common neighbor.  
Note that $\binom{m}{2}$ counts the total number of unordered pairs of vertices $\{u_1,u_2\} \subseteq \mathcal{U}$. Since  $\mathcal{G}$ contains no $\mathcal{K}_{2,2}$, any (unordered) pair of vertices $\{u_1,u_2\} \subseteq \mathcal{U}$ can have at most one common neighbor in $\mathcal{V}$. Therefore, 
\begin{align} \label{eq:use_K22}
	\binom{m}{2} &\geq \sum_{v \in \mathcal{V}} \binom{d(v)}{2}\\
	&\geq n \binom{\sum_{v\in \mathcal{V}}d(v)/n}{2} \label{eq:use_convex} \\
	&= n \binom{e/n}{2} \label{eq:use_sum_dv_and_e}
\end{align}
where Step \eqref{eq:use_convex} uses convexity of the function $f(x) = \binom{x}{2}=\frac{x(x-1)}{2}$ and Step \eqref{eq:use_sum_dv_and_e} uses \eqref{eq:sum_dv_and_e}. Therefore, $m^2\geq m(m-1) \geq e \big(\frac{e}{n}-1\big)$, and thus $e^2 -ne -nm^2 \leq 0$. It follows that $e$ is upper bounded by the larger root of the quadratic equation $x^2-nx-nm^2=0$, i.e.,
\begin{align}
	e \leq \frac{n+\sqrt{n^2+4nm^2}}{2} \leq \frac{n+(n+2m\sqrt{n})}{2} = n+m\sqrt{n}.
\end{align}
This completes the proof of Lemma \ref{lem:K22p2}. \hfil \qed

\section{A clarification of Table II of Yun et al. \cite{QMAC_Yun}} \label{sec:yunetal}
Reference \cite{QMAC_Yun} studies the sum-rate capacity of two-sender MACs when the two transmitters are provided with non-signaling assistance. 
Table II of \cite{QMAC_Yun} contains several capacity results for a set of two-user MACs parameterized by a scalar $\eta\in [0.5,1]$. In particular, it compares $C_{\Sigma}$, which is the classical sum-rate capacity, with $C_{\Sigma}^{\TxNS}$, which is the sum-rate capacity with transmitter-side NS assistance. 
In the case of $\eta = 0.5$, the table indicates that $C_{\Sigma} = 0.451$ and $C_{\Sigma}^{\TxNS} = 1$. Since all-party NS assistance subsumes transmitter-side NS assistance, this would imply that $C_{\Sigma}^{\NS}\geq 1$, seemingly achieving a multiplicative gain in capacity from NS-assistance that exceeds the extremal value of $K=2$ for this case. We clarify this contradiction here by pointing out a minor oversight in the calculation of $C_{\Sigma}^{\TxNS}$ in \cite{QMAC_Yun}. 

The channel of interest is described as follows. The channel alphabets are $\mathcal{Y}=\mathcal{X}_1=\mathcal{X}_2 = \{0,1\}^2$, and the channel law is that, for every $y\in \{0,1\}^2,(a_1,a_2)\in \{0,1\}^2, (b_1,b_2)\in \{0,1\}^2$,
\begin{align}
	&\Pr(Y=y\mid X_1=(a_1,a_2), X_2= (b_1,b_2)) \notag \\
	&= \begin{cases}
		\eta \mathbb{I}(y=(a_1,b_1))+\frac{1-\eta}{4}, & \mbox{if}~ a_2\oplus b_2 = a_1b_1\\
		(1-\eta)\mathbb{I}(y=(a_1,b_1))+\frac{\eta}{4}, &\mbox{otherwise}
	\end{cases}
\end{align}
where $\oplus$ denotes the addition modulo $2$. The operational meaning of the channel is as follows. If the inputs $(a_1,a_2), (b_1,b_2)$ satisfy the condition $a_2\oplus b_2 = a_1b_1$, then the channel outputs $Y = (a_1,b_1)$ with probability $\eta$, and outputs a uniform noise in $\{0,1\}^2$ with probability $1-\eta$. If the inputs do not satisfy that condition, then the channel outputs $Y = (a_1,b_1)$ with probability $1-\eta$, and outputs a uniform noise in $\{0,1\}^2$ with probability $\eta$. 

In the case of $\eta=0.5$, we note that the channel behaviors turn out to be the same in both cases, and the channel law can be simplified as $\Pr(Y=y\mid X_1=(a_1,a_2), X_2= (b_1,b_2)) = \frac{1}{2}  \mathbb{I}(y=(a_1,b_1))+\frac{1}{8}$. Equivalently, the channel admits the form $Y = \bbsmatrix{a_1\\b_1} \oplus Z$, where $Z\in \{0,1\}^2$  such that $\Pr(Z=\bbsmatrix{0\\0}) = \frac{5}{8}$, and $\Pr(Z=\bbsmatrix{z_1\\z_2}) = \frac{1}{8}$ for all $\bbsmatrix{z_1\\z_2} \neq \bbsmatrix{0\\0}$. 
Even if both transmitters collaborate, the sum-rate capacity cannot be more than $2-H(Z)\approx 0.4512$ (bits), which is the classical capacity for the reduced point-to-point channel. This is because for a point-to-point channel, non-signaling assistance does not improve the capacity \cite{matthews2012linear}. Therefore, the correct value should satisfy $C_{\Sigma}^{\TxNS} \leq C_{\Sigma}^{\NS} \leq 2-H(Z)\approx 0.4512$ for $\eta = 0.5$. We do note, however, that Figure 3 of \cite{QMAC_Yun} seems to be consistent with the corrected result.

 \section*{Acknowledgment}
The authors acknowledge the assistance of ChatGPT 5.6 Sol in this work. Following a series of discussions focused on the binary-adder channel and potential generalizations of the authentication schemes of \cite{Yao_Jafar_NSSato}, ChatGPT 5.6 Sol helped identify the structure of the NS-assisted scheme used to prove the achievability part of Theorem \ref{thm:NSMAC} for the $K=2$ setting. The insights from the $K=2$ setting led the authors to the general achievability proof, which was verified first with the help of ChatGPT 5.6 Sol and then manually by the authors. ChatGPT 5.6 Sol was also used to identify the class of $K$-user MACs that attains the extremal ratio in Theorem \ref{thm:sum_prod_lock} and to assist in developing its proof, particularly Lemmas \ref{lem:K22free}, \ref{lem:K22pK}, and \ref{lem:K22p2}. It also assisted with figure preparation and literature searches. The authors developed and verified the final proofs and accept responsibility for any remaining errors.

\bibliographystyle{IEEEtran}
\bibliography{../../bib_file/yy.bib}
\end{document}